
\documentclass[aps,pra,twocolumn,nofootinbib, superscriptaddress,10pt]{revtex4-1}

\usepackage{mathdots}
\usepackage{amsfonts,amssymb,amsmath}
\usepackage{mathrsfs}
\usepackage[]{graphics,graphicx,epsfig}
\usepackage{amsthm}
\usepackage{csquotes}
\MakeOuterQuote{"}
\usepackage{epstopdf}
\usepackage{color}
\usepackage[bookmarks=false]{hyperref}

\def\identity{\leavevmode\hbox{\small1\kern-3.8pt\normalsize1}}

\newtheorem{lemma}{Lemma}
\newtheorem{proposition}{Proposition}

\newcommand{\ket}[1]{\left | #1 \right\rangle}

\newcommand{\diag}{\operatorname{diag}}

\newcommand{\tr}{\operatorname{tr}}
\newcommand{\1}{\operatorname{\uppercase\expandafter{\romannumeral1}}}
\newcommand{\2}{\operatorname{\uppercase\expandafter{\romannumeral2}}}
\newcommand{\3}{\operatorname{\uppercase\expandafter{\romannumeral3}}}
\newcommand{\4}{\operatorname{\uppercase\expandafter{\romannumeral4}}}
\newcommand{\5}{\operatorname{\uppercase\expandafter{\romannumeral5}}}
\newcommand{\6}{\operatorname{\uppercase\expandafter{\romannumeral6}}}
\newcommand{\7}{\operatorname{\uppercase\expandafter{\romannumeral7}}}
\newcommand{\8}{\operatorname{\uppercase\expandafter{\romannumeral8}}}
\newcommand{\9}{\operatorname{\uppercase\expandafter{\romannumeral9}}}

\newcommand{\GHZ}{\mathrm{G}}

\newcommand{\rmi}{\mathrm{i}}

\newcommand{\rme}{\operatorname{e}}

\newcommand{\caH}{\mathcal{H}}
\newcommand{\caP}{\mathcal{P}}
\newcommand{\caB}{\mathcal{B}}
\newcommand{\bbZ}{\mathbb{Z}}

\newcommand{\bfr}{\mathbf{r}}
\newcommand{\bfh}{\mathbf{h}}
\newcommand{\bfj}{\mathbf{j}}
\newcommand{\bfg}{\mathbf{g}}
\newcommand{\scrX}{\mathscr{X}}
\newcommand{\scrY}{\mathscr{Y}}
\newcommand{\scrZ}{\mathscr{Z}}

\renewcommand{\epsilon}{\varepsilon}

\def\eqref#1{\textup{(\ref{#1})}}
\newcommand{\tref}[1]{Table~\ref{#1}}
\newcommand{\eref}[1]{Eq.~\textup{(\ref{#1})}}
\newcommand{\lref}[1]{Lemma~\ref{#1}}
\newcommand{\pref}[1]{Proposition~\ref{#1}}
\newcommand{\eqsref}[2]{Eqs.~(\ref{#1}) and (\ref{#2})}

\bibliographystyle{apsrev}

\def\<{\langle}  
\def\>{\rangle}  

\newcommand{\rcite}[1]{Ref.~\cite{#1}}
\newcommand{\rscite}[1]{Refs.~\cite{#1}}

\begin{document}
\title{Optimal Verification of Greenberger-Horne-Zeilinger States}

\author{Zihao Li}
\affiliation{Department of Physics and Center for Field Theory and Particle Physics, Fudan University, Shanghai 200433, China}
\affiliation{State Key Laboratory of Surface Physics, Fudan University, Shanghai 200433, China}

\author{Yun-Guang Han}
\affiliation{Department of Physics and Center for Field Theory and Particle Physics, Fudan University, Shanghai 200433, China}
\affiliation{State Key Laboratory of Surface Physics, Fudan University, Shanghai 200433, China}

\author{Huangjun Zhu}
\email{zhuhuangjun@fudan.edu.cn}
\affiliation{Department of Physics and Center for Field Theory and Particle Physics, Fudan University, Shanghai 200433, China}
\affiliation{State Key Laboratory of Surface Physics, Fudan University, Shanghai 200433, China}
\affiliation{Institute for Nanoelectronic Devices and Quantum Computing, Fudan University, Shanghai 200433, China}
\affiliation{Collaborative Innovation Center of Advanced Microstructures, Nanjing 210093, China}

\begin{abstract}
We construct optimal protocols for verifying  qubit and qudit GHZ states using local projective measurements. When the local dimension is a prime, an optimal protocol is constructed from Pauli measurements only.
Our protocols provide a highly efficient way for estimating the fidelity and certifying genuine multipartite entanglement.  In particular, they enable the certification of genuine multipartite entanglement  using only one test when the local dimension is sufficiently large.
By virtue of adaptive local projective measurements, we then construct protocols for verifying GHZ-like states that are optimal over all protocols based on one-way communication.  The efficiency can be improved further if additional communications are allowed.  Finally, we construct optimal protocols for verifying GHZ states and nearly optimal protocols for  GHZ-like states in the adversarial scenario.
\end{abstract}

\date{\today}
\maketitle

\section{Introduction}
Greenberger-Horne-Zeilinger (GHZ) states \cite{GHZ89,GHZ90} are typical examples of quantum states with genuine multipartite entanglement (GME) \cite{Guhne09}. They play key roles both  in quantum information processing and in foundational studies, such as quantum secret sharing \cite{Titt01,Chen05}, entanglement purification \cite{Cheong07},
open-destination teleportation \cite{Zhao04}, quantum networks \cite{McCut16}, randomness verification \cite{HayashiK18}, and multipartite nonlocality tests \cite{Scara01,Zhang15}.
The significance of GHZ states is witnessed by numerous experiments devoted to preparing them in various platforms, with ever-increasing number of particles \cite{Zhang15,Wang2016,Song17,Erhard18,Ji19,Resch05,Cruz18,Imany19}. In practice, multipartite quantum states
prepared in experiments are never perfect, so  it is crucial to verify these states with high precision using limited resources.
However, traditional tomographic approaches   are known to be resource consuming and very inefficient \cite{Resch05,Cruz18,Haff05}.
Even with popular alternatives like direct fidelity estimation \cite{FlamL11}, the scaling behaviors of the number of required measurements  with the infidelity and the qubit number are suboptimal.

Recently, an alternative approach known as quantum state verification has attracted increasing attention \cite{HayaMT06,Aolita15,Hang17,PLM18,ZhuEVQPSshort19,ZhuEVQPSlong19}.
Efficient verification protocols based on local  operations and classical communication (LOCC) have been constructed for stabilizer states \cite{HayaM15,PLM18,ZhuH19E,ZhuEVQPSlong19,Kalev19}, hypergraph states  \cite{ZhuH19E}, and Dicke states \cite{Liu19}. However, optimal protocols are known only for
maximally entangled states \cite{HayaMT06, Haya09,ZhuH19O} and bipartite pure states under restricted LOCC \cite{PLM18,LHZ19,Wang19,Yu19}. For quantum states with GME, such as GHZ states,
no optimal protocol has been found  so far because such optimization problems are usually extremely difficult. In addition, most protocols known so far are not homogeneous, which is not desirable for practical applications \cite{ZhuEVQPSlong19}.
Any progress on these issues is of interest to both theoretical studies and practical applications.

In this paper, we propose optimal  protocols for verifying (qubit and qudit) GHZ states using local projective measurements.
When the local dimension is a prime, only Pauli measurements are required. Moreover, all the protocols we construct  are  homogeneous. They offer a highly  efficient tool for  fidelity estimation and entanglement certification. Surprisingly, the GME can be certified with any given significance level using only one test when the local dimension is sufficiently large, which has never been achieved or even anticipated before. By virtue of adaptive local projective measurements, our protocols can be generalized to GHZ-like states, while retaining the high efficiency. Moreover, these protocols can be applied to the adversarial scenario with minor modification. Now the protocols for verifying GHZ states based on local projective measurements are actually optimal among all possible protocols without locality restriction.
Besides quantum state verification, our protocols are also useful for verifying quantum gates, including some Clifford gates and the controlled-swap (CSWAP) gate \cite{ZhuZ20}.

\section{pure-state verification}
\subsection{Basic framework}
Before proposing protocols for verifying GHZ states, let us  briefly  review  the general framework of pure-state verification \cite{PLM18,ZhuEVQPSshort19,ZhuEVQPSlong19}.
Consider a quantum device that  is supposed to produce the target state $|\Psi\>\in\caH$, but actually produces the states $\sigma_1,\sigma_2,\dots,\sigma_N$ in $N$ runs.
Our task is to verify  whether these states are sufficiently close to the target state on average.
To achieve this task, we can perform two-outcome  measurements  $\{E_l,\openone-E_l\}$ from a set of accessible measurements (projective measurements are most appealing in practice, but our discussions  apply to general measurements).
Each measurement represents a test, and the test operator  $E_l$ corresponds to passing the test.
Here we require that the target state $|\Psi\>$ can  always pass the test, that is,  $E_l|\Psi\>=|\Psi\>$.
Suppose the test $\{E_l,\openone-E_l\}$ is performed with probability $p_l$, then the verification operator (also called a strategy)
is given by $\Omega=\sum_{l} p_l E_l$.
If  $\<\Psi|\sigma_j|\Psi\>\leq1-\epsilon$, then the average probability that $\sigma_j$ can pass each test satisfies \cite{PLM18,ZhuEVQPSlong19}
\begin{equation}\label{eq:ProbPass1test}
\tr(\Omega \sigma_j)\leq 1- [1-\beta(\Omega)]\epsilon=1- \nu(\Omega)\epsilon,
\end{equation}
where $\beta(\Omega)$ denotes the second largest eigenvalue of $\Omega$, and $\nu(\Omega):=1-\beta(\Omega)$ is the spectral gap from the maximum eigenvalue. The inequality in \eref{eq:ProbPass1test} is saturated when $\<\Psi|\sigma_j|\Psi\>=1-\epsilon$ and $\sigma_j$ is supported on the subspace associated with
the two largest eigenvalues of $\Omega$.

Suppose the states $\sigma_1,\sigma_2,\dots,\sigma_N$ are  independent  of each other  and let $\epsilon_j=1-\<\Psi|\sigma_j|\Psi\>$.
Then the probability that these states can pass all $N$ tests satisfies the following tight upper bound \cite{ZhuEVQPSshort19,ZhuEVQPSlong19}
\begin{equation}\label{eq:ProbPassNtest}
\prod_{j=1}^N\tr(\Omega \sigma_j) \leq \prod_{j=1}^N[1- \nu(\Omega)\epsilon_j]\leq[1- \nu(\Omega)\bar{\epsilon}]^N,
\end{equation}
where $\bar{\epsilon}=\sum_j\epsilon_j/N$ is the average infidelity.
In order to ensure the condition $\bar{\epsilon}<\epsilon$ with significance level $\delta$, that is, to ensure the condition
$\prod_{j}\tr(\Omega \sigma_j)\leq\delta$ when $\bar{\epsilon}\geq\epsilon$,
it suffices to perform \cite{ZhuEVQPSshort19,ZhuEVQPSlong19}
\begin{equation}\label{eq:NumberTest}
N=\biggl\lceil\frac{ \ln \delta}{\ln[1-\nu(\Omega)\epsilon]}\biggr\rceil\approx \frac{ \ln \delta^{-1}}{\nu(\Omega)\epsilon}
\end{equation}
tests.
To minimize the number of tests, we need to maximize the value of the spectral gap $\nu(\Omega)$ under LOCC. This task is usually extremely difficult if not impossible. It should be pointed out that the approximation in \eref{eq:NumberTest} is valid only when $\nu(\Omega)\epsilon\ll 1$. Otherwise, the minimum number of tests required is more sensitive to the spectral gap $\nu(\Omega)$, which has an important implication for entanglement certification, as we shall  see in Sec.~\ref{sec:GME}.

In the above presentation we follow the assumptions and interpretation in Refs.~\cite{ZhuEVQPSshort19,ZhuEVQPSlong19}, in contrast to the counterpart  in Ref.~\cite{PLM18}. The authors in Ref.~\cite{PLM18} assume that either $\sigma_j=|\Psi\>\<\Psi|$ for all $j$ (good case)
or $\epsilon_j\geq\epsilon$ for all $j$ (bad case), and the task is to distinguish which case occurs.
However, this assumption is difficult to guarantee and is not feasible from a practical point of view.
The assumptions in Refs.~\cite{ZhuEVQPSshort19,ZhuEVQPSlong19} are more reasonable and the conclusion is stronger in comparison. In particular, the average fidelity rather than the maximum fidelity of the prepared states $\sigma_1,\sigma_2,\dots,\sigma_N$  can be verified.

\subsection{\label{sec:Homo}Homogeneous strategies}
A verification strategy is  \emph{homogeneous} if the verification operator $\Omega$ has the following form
\begin{equation}\label{eq:HomoStra}
\Omega=|\Psi\>\<\Psi|+\beta(\Omega)(\openone-|\Psi\>\<\Psi|)
\end{equation}
for some $0\leq\beta(\Omega)<1$.
Homogeneous strategies are most appealing for quantum state verification because of several important merits not shared by inhomogeneous strategies.
To see this, let us consider a simple scenario in which all  $\sigma_1,\sigma_2,\dots,\sigma_N$ are identical to the state $\sigma$ with fidelity $\<\Psi|\sigma|\Psi\>=F=1-\epsilon$.  If $\Omega$ is homogeneous,  then the bounds in
\eqsref{eq:ProbPass1test}{eq:ProbPassNtest} are saturated, so there is a simple connection between the passing probability and the fidelity of the states prepared, namely,
$\tr(\sigma \Omega)=[1-\beta(\Omega)]F+\beta(\Omega)$, which implies that
\begin{equation}\label{eq:FEIhom}
F=\frac{\tr(\Omega\sigma)-\beta(\Omega)}{\nu(\Omega)},\quad 1-F=\frac{1-\tr(\Omega\sigma)}{\nu(\Omega)}.
\end{equation}
Based on this connection, we can estimate the fidelity and infidelity accurately given sufficiently many tests.  According to \rcite{ZhuEVQPSlong19},
the standard deviation of this estimation is
\begin{equation}\label{eq:Fstd}
\Delta F=\frac{\sqrt{(1-F)(F+\nu^{-1}-1)}}{\sqrt{N}}
\leq \frac{1}{2\nu \sqrt{N}},
\end{equation}
where $\nu=\nu(\Omega)=1-\beta(\Omega)$ and $N$ is the number of tests performed.

If the strategy $\Omega$  is inhomogeneous by contrast, given the passing probability
we can only derive lower and upper bounds for the infidelity \cite{ZhuEVQPSlong19}
\begin{equation}\label{eq:FidelityPassingProb2}
\frac{1-\tr(\Omega\sigma)}{1-\tau(\Omega)}\leq 1-F\leq\frac{1-\tr(\Omega\sigma)}{\nu(\Omega)},
\end{equation}
where $\tau(\Omega)$ is the smallest eigenvalue of $\Omega$. The lower bound in \eref{eq:FidelityPassingProb2} is saturated
when  $\sigma$ is supported on the subspace associated with  the largest and the smallest eigenvalues of $\Omega$.
When the verification operator $\Omega$ is singular, that is,  $\tau(\Omega)=0$, the upper bound is $1/\nu(\Omega)$ times as large as the lower bound. When $\nu(\Omega)=0.1$ for example, in the worst case  we can only conclude (with a given significance level) that the infidelity is smaller than 0.1 (0.5)  even if the actual infidelity is only 0.01 (0.05). Such a conclusion is far from being satisfactory even though it is correct.
When the verification protocol is applied to entanglement detection, this problem makes it much more difficult to detect entanglement. Unfortunately,  the problem cannot be resolved by increasing the number of tests. Therefore, it is desirable to construct a homogeneous verification strategy whenever possible.

In addition, homogeneous strategies are appealing for quantum state verification in the adversarial scenario \cite{ZhuEVQPSshort19,ZhuEVQPSlong19}. In particular, a homogeneous strategy is the most efficient among all verification strategies  with the same spectral gap; it can achieve a much better scaling behavior in the number of tests compared with a singular strategy.

\section{Optimal Verification of GHZ states}
Here we are mainly interested in GHZ states of the form \cite{GHZ89,GHZ90}
\begin{equation}\label{eq:GHZqudit}
|\GHZ_n^d\>=\frac{1}{\sqrt{d}}\sum_{j=0}^{d-1}|j\>^{\otimes n}.
\end{equation}
Previously, a coloring protocol was proposed in \rcite{ZhuH19E} (cf.~Ref.~\cite{HayashiK18}), which can achieve a spectral gap of $1/2$ using two settings based on Pauli measurements, but this protocol is not homogeneous (see \tref{tab:ProtocolGHZ}), and the verification operator is singular.
For a bipartite maximally entangled state of the same local dimension,  the maximum  spectral gap of any verification operator based on LOCC (or separable measurements) is $d/(d+1)$ \cite{HayaMT06, PLM18,Haya09,ZhuH19O}. Obviously, the counterpart for GHZ states cannot be larger. Here we shall show that this upper bound can  be saturated.

\newcommand{\cmark}{\ding{51}}%
\newcommand{\xmark}{\ding{55}}%
\begin{table*}
	\caption{\label{tab:ProtocolGHZ}
Comparison of  verification strategies for the $n$-qudit GHZ state $|\GHZ_n^d\>$ in \eref{eq:GHZqudit}.
Here $R_d(\Omega)$ denotes the range of the local dimension over which each strategy is applicable, $\nu(\Omega)$ denotes the spectral gap of each strategy,
$N(\epsilon,\delta,\Omega)$ denotes the number of tests required to verify the target state within infidelity $\epsilon$ and significance level $\delta$, and
$N_{\rm MS}(\Omega)$ denotes the number of potential  measurement settings.
Strategies $\Omega_{\rm PLM}$ and $\Omega_{\rm ZH}$ are proposed in \rscite{PLM18} and \cite{ZhuH19E}, respectively; the other three strategies are proposed in this paper.
}		
\begin{math}
\begin{array}{c|c|c|c|c|c}
\hline\hline
\mbox{Strategy}  &R_d(\Omega) &\nu(\Omega) &\mbox{Is $\Omega$ homogeneous?} &N(\epsilon,\delta,\Omega) & N_{\rm MS}(\Omega) \\[0.5ex]
\hline
\Omega_{\rm PLM}\ \mbox{\cite{PLM18}}   & d=2      &2^{n-1}/(2^n-1) &\mbox{Yes} &(2^n-1)2^{1-n}\epsilon^{-1}\ln\delta^{-1}  & 2^n-1  \\[0.5ex]
\Omega_{\rm ZH} \ \mbox{\cite{ZhuH19E}} & d\geq2           &1/2     &\mbox{No} &      2       \epsilon^{-1}\ln\delta^{-1}  & 2      \\[0.5ex]
\Omega_{\1}                             & d=2              &2/3     &\mbox{Yes} &(3/2)   \epsilon^{-1}\ln\delta^{-1}  & 2^{n-1}+1    \\[0.5ex]		
\Omega_{\2}                             & d \ \,\mbox{is odd prime} &d/(d+1) &\mbox{Yes} &(d+1)d^{-1} \epsilon^{-1}\ln\delta^{-1}  & d^{n-1}+1   \\[0.5ex]		
\Omega_{\3}          & d\geq3           &d/(d+1) &\mbox{Yes} &(d+1)d^{-1}\epsilon^{-1}\ln\delta^{-1}   & \big\lceil\frac{3}{4}(d-1)^2\big\rceil^{n-1}+1 \\[0.5ex]
\hline\hline
\end{array}	
\end{math}
\end{table*}

\subsection{\label{sec:QubitGHZ}Optimal verification of the $n$-qubit GHZ state}

First, we consider optimal verification of the $n$-qubit GHZ state $|\GHZ_n^2\>$ based on Pauli measurements.
 Recall that the Pauli group for each qubit is generated by  three Pauli matrices,
\begin{equation}\label{eq:PauliM}
X=\begin{pmatrix}0&1\\1&0\end{pmatrix},\quad\
Y=\begin{pmatrix}0&-\rmi\\\rmi&0\end{pmatrix},\quad\
Z=\begin{pmatrix}1&0\\0&-1\end{pmatrix}.
\end{equation}
Denote by $I$ the identity operator on the Hilbert space of one party, then
a Pauli measurement is specified by a string  in $\{I, X,Y, Z\}^n$, which determines the Pauli operators measured on individual qubits;
the identity means no measurement. The weight of the Pauli measurement is the number of terms in the string that are not equal to the identity. The Pauli measurement is complete if the weight is equal to $n$, that is, the string does not contain the identity. A test operator $E$ (and the corresponding test) based on a Pauli measurement is \emph{not admissible}  if there exists another test operator $E'$ based on the same or a different Pauli measurement such that $E'\leq E$ and $\tr(E')<\tr(E)$; otherwise, the test operator $E$ (and the corresponding test) is \emph{admissible}. A Pauli measurement is admissible if at least one admissible test operator can be constructed from this Pauli measurement and not admissible otherwise.

Given a Pauli measurement, let $\{\Pi_1,\Pi_2, \ldots, \Pi_q\}$ be the set of projectors corresponding to the measurement outcomes.
The \emph{canonical test projector} is defined as
\begin{equation}
P= \sum_{\<\GHZ_n^2|\Pi_o |\GHZ_n^2\> >0} \Pi_o.
\end{equation}
To guarantee that the target state $|\GHZ_n^2\>$ can always pass the test, any other test operator $E$ satisfies $E\geq P$ and thus cannot be admissible.  The Pauli measurement is admissible iff the canonical test projector is admissible. These observations reveal the crucial role of canonical test projectors in constructing an efficient verification protocol.

For example, the canonical test projector associated with $Z^n$  (understood as a Pauli string with $n$ Pauli operators equal to $Z$) reads
\begin{equation}\label{eq:TestProjZn}
P_0=(|0\>\<0|)^{\otimes n}+(|1\>\<1|)^{\otimes n};
\end{equation}
the test is passed iff the outcomes of all $Z$ measurements on individual qubits coincide. Given a string in $\{X,Y\}^n$, let $\scrY$  be the set of parties that perform $Y$ measurements, then  $\overline{\scrY}:=\{1,2,\ldots, n\}\setminus \scrY$  is the set of parties that perform $X$ measurements. When $|\scrY|=2t$ is even, the canonical test projector reads
\begin{equation}\label{eq:PY}
P_{\scrY}=\frac{1}{2} \biggl[\openone + (-1)^t  \prod_{k\in \scrY} Y_k \prod_{k'\in \overline{\scrY}} X_{k'}\biggr];
\end{equation}
the test is passed iff the total number of outcomes $-1$ (either from $X$ or $Y$ measurements) has the same parity as $t$. The following lemma clarifies all admissible Pauli measurements and test operators for  $|\GHZ_n^2\>$; see Appendix~\ref{app:ProofProposition} for a proof.

\begin{lemma}\label{lem:TestProjAdmGHZ2}
The GHZ state $|\GHZ_n^2\>$ has $1+2^{n-1}$ admissible Pauli measurements, namely $Z^n$ and all strings in $\{X,Y\}^n$ with even numbers of $Y$. The corresponding $1+2^{n-1}$ canonical test projectors in  Eqs.~\eqref{eq:TestProjZn} and \eqref{eq:PY} are the only admissible test operators.
\end{lemma}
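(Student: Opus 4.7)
The plan is threefold: (i) reduce admissibility of a Pauli measurement to admissibility of its canonical test projector $P_M$; (ii) enumerate $P_M$ over all Pauli strings $M\in\{I,X,Y,Z\}^n$; and (iii) verify admissibility of the surviving candidates. For (i) I would note that any test operator $E$ assembled from outcomes of $M$ must contain every outcome $\Pi_o$ with $\<\GHZ_n^2|\Pi_o|\GHZ_n^2\>>0$, since otherwise $E|\GHZ_n^2\>\ne|\GHZ_n^2\>$; hence $E\ge P_M$, and $E\ne P_M$ forces $\tr(P_M)<\tr(E)$, so $P_M$ beats $E$. The only candidate admissible test operator from $M$ is therefore $P_M$ itself, and the lemma reduces to deciding which $P_M$'s are admissible.

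For (ii), I would compute $P_M$ from $\<\GHZ_n^2|\Pi_o|\GHZ_n^2\>=\tfrac12\tr(\Pi_o|0^n\>\<0^n|)+\tfrac12\tr(\Pi_o|1^n\>\<1^n|)+\mathrm{Re}\,\tr(\Pi_o|0^n\>\<1^n|)$ and case-split on $M$. (a) If $M$ contains an $I$, or mixes $Z$ with an $X$ or $Y$, then $\<0|I|1\>=\<0|Z|1\>=0$ kills the off-diagonal term on the offending qubit and a short calculation gives $P_M=\openone$ or $P_M=(|0\>\<0|^{\otimes k}+|1\>\<1|^{\otimes k})\otimes I^{\otimes(n-k)}$; in either case $P_0\le P_M$ with $\tr(P_0)=2<\tr(P_M)$, so $P_M$ is dominated by $P_0$. (b) If $M=Z^n$, only the diagonal part survives and $P_M=P_0$ as in \eqref{eq:TestProjZn}. (c) If $M\in\{X,Y\}^n$, expanding each outcome in the Pauli basis yields $\<\GHZ_n^2|\Pi_o|\GHZ_n^2\>=2^{-n}[1+\tfrac12(i^{|\scrY|}+(-i)^{|\scrY|})\prod_k\epsilon_k]$ with $\epsilon_k\in\{\pm1\}$ labelling the outcome on qubit $k$. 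For $|\scrY|$ odd the bracket is identically $1$, so $P_M=\openone$ and is again beaten by $P_0$; for $|\scrY|=2t$ even, exactly the outcomes with $\prod_k\epsilon_k=(-1)^t$ contribute and their sum reproduces $P_M=P_{\scrY}$ as in \eqref{eq:PY}.

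For (iii) it remains to check that $P_0$ and each $P_{\scrY}$ are genuinely admissible. For $P_0$ this is direct: any $E'\le P_0$ is supported on $\mathrm{span}\{|0^n\>,|1^n\>\}$, the only rank-one Pauli-basis projectors with this support are $|0^n\>\<0^n|$ and $|1^n\>\<1^n|$ (both from $Z^n$), and $E'|\GHZ_n^2\>=|\GHZ_n^2\>$ forces $E'=P_0$. For $P_{\scrY}$ I would exploit that $\mathrm{range}(P_{\scrY})$ is the $+1$ eigenspace of the $n$-qubit Pauli $O=(-1)^t\prod_k\mathrm{Pauli}_k$: a single-qubit Pauli eigenstate $|b_k\>$ is an eigenvector of $\mathrm{Pauli}_k$ only when the Pauli defining $|b_k\>$ coincides with $\mathrm{Pauli}_k$, so a product Pauli-basis state $\bigotimes_k|b_k\>$ sits inside $\mathrm{range}(P_{\scrY})$ only if it is an outcome of $M$ itself. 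Any strict subset of the $2^{n-1}$ such outcomes gives $\<\GHZ_n^2|E'|\GHZ_n^2\><1$, contradicting $E'|\GHZ_n^2\>=|\GHZ_n^2\>$, so $E'=P_{\scrY}$. Combining (b) with the even-$|\scrY|$ branch of (c) then yields $1+2^{n-1}$ admissible Pauli measurements and the same number of admissible test operators.

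The hard part will be the admissibility of the $P_{\scrY}$'s: since $\tr(P_{\scrY})=2^{n-1}$ vastly exceeds $\tr(P_0)=2$ for $n\ge 3$, the only way $P_{\scrY}$ escapes being dominated by $P_0$ (or by any other Pauli-based test operator) rests on the stabilizer characterization of its range and on the commutation constraint that forces the relevant single-qubit Pauli eigenstates to come from the measurement $M$ itself; the casework in step (ii) is lengthier but mechanical by comparison.
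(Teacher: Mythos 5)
Your proposal is correct and follows essentially the same route as the paper's proof: reduce admissibility of a Pauli measurement to that of its canonical test projector, exhaust the Pauli strings to show that every canonical projector other than $P_0$ and the $P_{\scrY}$ with even $|\scrY|$ is dominated by $P_0$, and then verify that these survivors are admissible. The only differences are in detail rather than strategy: you dispose of incomplete strings by computing their canonical projectors directly instead of the paper's adaptive-refinement argument, and you back up the admissibility of $P_{\scrY}$ with an explicit stabilizer-range argument (any contributing product Pauli eigenbasis outcome must lie in the $+1$ eigenspace, hence come from the same measurement, and a strict sub-sum has GHZ overlap below $1$), where the paper simply notes $P_0\nleq P_{\scrY}$ and that no smaller canonical projector sits below $P_{\scrY}$.
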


Our verification protocol is composed of $1+2^{n-1}$ admissible tests in which the test $P_0$ is performed with probability $1/3$ and the other $2^{n-1}$ tests are performed with  probability $1/(3\times 2^{n-2})$ each. The verification operator reads
\begin{equation}\label{eq:Omega1}
\Omega_{\1} := \frac{1}{3}\biggl(P_0+\frac{1}{2^{n-2}}\sum_{\scrY}P_{\scrY}\biggr) = \frac{1}{3}\big(\openone+ 2|\GHZ_n^2\>\<\GHZ_n^2|\big),
\end{equation}
which is homogeneous.
Here the second equality is proved in Appendix~\ref{app:ProofsEquations}.
We have $\beta(\Omega_{\1})=1/3$, and
\begin{equation}
\nu(\Omega_{\1})=\frac{2}{3},\qquad N(\Omega_{\1})\approx \frac{3}{2\epsilon} \ln \delta^{-1}.
\end{equation}
This protocol is optimal among all protocols based on LOCC or separable measurements.
Compared with the  strategy in \rcite{PLM18} which achieves $\nu=2^{n-1}/(2^n-1)$ with $2^n-1$ measurement settings, our strategy not only has a higher efficiency, but also requires fewer measurement settings, as illustrated in \tref{tab:ProtocolGHZ}. The protocol proposed in Ref.~\cite{ZhuH19E} requires much fewer measurement settings, but it is not homogeneous and thus has a number of drawbacks as mentioned in Sec.~\ref{sec:Homo}.  The current protocol is the most appealing if it is not difficult to switch Pauli measurements, which is the case in most scenarios of practical interest.

Moreover, our protocol proposed above  is essentially the
unique optimal protocol based on Pauli measurements as shown in \pref{pro:GHZ2unique} below and proved in Appendix~\ref{app:ProofProposition}.
In particular, the number $1+2^{n-1}$ of (potential) measurement settings cannot be reduced. In addition, all canonical test projectors are required to construct a homogeneous strategy. It should be pointed out that there is some freedom in choosing the Pauli group:  different choices are related to each other by local unitary transformations. Here we focus on the canonical Pauli group generated by Pauli matrices in \eref{eq:PauliM} for each qubit;
only  nonadaptive Pauli measurements associated with this Pauli group are considered.
Nevertheless, the test operators are not required to be projectors, although it turns out that this relaxation does not provide any advantage.

\begin{proposition}\label{pro:GHZ2unique}
	Suppose  $\Omega$ is an optimal  verification strategy with $\nu(\Omega)=2/3$ for $|\GHZ_n^2\>$ that is based on Pauli measurements. Then $\Omega=\Omega_{\1}$; in addition,
	$\Omega$ is composed of admissible tests with the same probabilities  as in 	 $\Omega_{\1}$.
\end{proposition}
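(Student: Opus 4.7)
My plan is to first invoke Lemma~\ref{lem:TestProjAdmGHZ2} to reduce to convex combinations of the $1+2^{n-1}$ canonical test projectors, and then to exploit the explicit block structure of such combinations in the computational basis to pin down every weight via a Fourier-type argument on $\mathbb{F}_2^n$.

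Concretely, given any optimal $\Omega=\sum_l p_l E_l$, admissibility together with Lemma~\ref{lem:TestProjAdmGHZ2} supplies, for each $E_l$, an admissible $E_l'\leq E_l$ equal either to $P_0$ or to some $P_\scrY$ with $|\scrY|$ even. The truncation $\Omega':=\sum_l p_l E_l'$ satisfies $\Omega'\leq\Omega$, so $\beta(\Omega')\leq\beta(\Omega)=1/3$; combined with the universal upper bound $\nu\leq 2/3$ this forces $\beta(\Omega')=1/3$, so $\Omega'$ is also optimal. Writing $\Omega'=q_0 P_0+\sum_\scrY q_\scrY P_\scrY$ with $q_0+\sum_\scrY q_\scrY=1$, the identity $(-1)^t\prod_{k\in\scrY}Y_k\prod_{k'\in\overline{\scrY}}X_{k'}\ket{b_1\cdots b_n}=(-1)^{\sum_{k\in\scrY}b_k}\ket{\bar b_1\cdots\bar b_n}$ shows that $\Omega'$ decomposes into $2^{n-1}$ mutually orthogonal $2\times 2$ blocks indexed by antipodal pairs $\{\ket{b},\ket{\bar b}\}$. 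The block at $\{\ket{0\cdots 0},\ket{1\cdots 1}\}$ has eigenvalues $1$ (on $\ket{\GHZ_n^2}$) and $q_0$, while every remaining block has eigenvalues $(1-q_0)/2\pm f(b)/2$ with $f(b):=\sum_\scrY q_\scrY(-1)^{\sum_{k\in\scrY}b_k}$.

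Imposing $\beta(\Omega')=1/3$ on the first block yields $q_0\leq 1/3$, while imposing it on the remaining blocks yields $|f(b)|\leq q_0-1/3$, and hence $q_0\geq 1/3$. Consequently $q_0=1/3$, $f(b)=0$ for every $b\notin\{0\cdots 0,1\cdots 1\}$, and $f(0\cdots 0)=f(1\cdots 1)=2/3$. Inverting the Fourier transform on $\mathbb{F}_2^n$, which is a bijection between the even-weight-indexed coefficients $(q_\scrY)$ and antipodally symmetric functions $f$, uniquely yields $q_\scrY=1/(3\cdot 2^{n-2})$ for every even $|\scrY|$, so $\Omega'=\Omega_\1$. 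Since $\Omega_\1\leq\Omega\leq\openone$ and $\beta(\Omega)=1/3$, both operators must coincide with $(1/3)\openone$ on the subspace orthogonal to $\ket{\GHZ_n^2}$, forcing $\Omega=\Omega_\1$. Finally, $\sum_l p_l(E_l-E_l')=0$ with positive-semidefinite summands implies $E_l=E_l'$ whenever $p_l>0$, so every test actually used is admissible with exactly the probability prescribed by $\Omega_\1$. I expect the main hurdle to be identifying and verifying the block-diagonal structure cleanly; the key observation is that every admissible Pauli string preserves the partition of the computational basis into antipodal pairs, after which the $\beta=1/3$ constraint collapses to a unique solution via Fourier inversion on $\mathbb{F}_2^n$ and the rest is routine linear algebra.
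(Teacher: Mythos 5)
Your proposal is correct, and it reaches the same conclusion as the paper by a genuinely different middle argument. The paper, after reducing to combinations of the admissible canonical projectors via \lref{lem:TestProjAdmGHZ2}, pins down the weights by a counting argument: $p_0\leq\beta(\Omega)=1/3$ gives $\tr(\Omega)=2p_0+2^{n-1}(1-p_0)\geq(2^n+2)/3$, the bound $\beta(\Omega)\geq[\tr(\Omega)-1]/(2^n-1)$ then forces both saturations (homogeneity and $p_0=1/3$), and uniqueness of the weights follows from linear independence of the $1+2^{n-1}$ admissible projectors; nonadmissible tests are excluded by a contradiction based on the strict trace decrease $\tr(P)<\tr(E)$. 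You instead diagonalize explicitly: every admissible string acts as $\ket{b}\mapsto\pm\ket{\bar b}$, so $\Omega'$ splits into $2\times2$ blocks over antipodal pairs, and the constraint $\beta(\Omega')\leq1/3$ (inherited from $\Omega'\leq\Omega$, since both fix $\ket{\GHZ_n^2}$ so $\beta$ is a maximum over the orthocomplement) collapses to $q_0=1/3$ and $f(b)=0$ off the pair $\{0^n,1^n\}$, after which Walsh--Hadamard inversion on the quotient $\mathbb{F}_2^n/\{0,1^n\}$ (whose character group is exactly the even-weight strings) yields the unique weights $q_\scrY=1/(3\cdot2^{n-2})$; your final sandwich $\Omega_{\1}\leq\Omega\leq\openone$ with $\beta(\Omega)=1/3$ then gives $\Omega=\Omega_{\1}$ and $E_l=E_l'$ directly, without needing the strict trace inequality. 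What each buys: the paper's trace bound is shorter and generalizes verbatim to the qudit case (\pref{pro:GHZdunique}), while your block/Fourier route makes the spectrum of an arbitrary candidate strategy fully explicit and replaces the linear-independence claim by an invertible transform, at the cost of being tied to the qubit structure. One small point to make explicit if you write this up: the step producing $E_l'\leq E_l$ for \emph{every} test operator (not just the listed admissible ones) uses, beyond the statement of \lref{lem:TestProjAdmGHZ2}, the facts that any Pauli-based test operator dominates its canonical test projector and that every canonical projector dominates an admissible one (via refinement for incomplete measurements); this is the same implicit replacement step the paper itself invokes, so it is not a gap relative to the paper, but it deserves a sentence.
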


Besides quantum state verification, our protocol is also useful for verifying quantum gates, including Clifford gates and the CSWAP gate.
The basic idea of quantum gate verification is to feed some pure test states into the quantum gate or gate set to be verified, and then verify the output states \cite{ZhuZ20,LSYZ20,Zeng19}.
Our protocol is useful whenever some  output states  are equivalent to  GHZ states under local Clifford transformations.

\subsection{Optimal verification of the $n$-qudit GHZ state}
Next, we generalize the above results to the qudit case, assuming that the local dimension $d$ is an odd prime. The qudit Pauli group is generated by  the phase operator $Z$ and the shift operator $X$ defined as follows,
\begin{equation}\label{eq:Z,X}
Z|j\>=\omega^j|j\>,\quad\  X|j\>=|j+1\>,\quad\  \omega=\rme^{2\pi\rmi/d},
\end{equation}
where $j\in \bbZ_d$ and $\bbZ_d$ is the ring of integers modulo $d$.

The concepts of admissible Pauli measurements/test operators  and canonical test projectors can be defined in a similar way as in Sec.~\ref{sec:QubitGHZ}. One admissible test is associated with the Pauli measurement
 $Z^n$ with the canonical test projector
\begin{equation}\label{eq:P0}
P_0=\sum_{j=0}^{d-1}(|j\>\<j|)^{\otimes n};
\end{equation}
the test is passed iff the outcomes of all $Z$ measurements coincide.
Each of the other admissible tests is associated with a string $\bfr\in \bbZ_d^n$ with $\sum_k r_k=0\mod d$, which means party $k$  performs the measurement on the eigenbasis of $XZ^{r_k}$ for $k=1,2,\dots,n$.
 The  canonical test projector reads
\begin{equation}\label{eq:Pother(d=p)}
P_{\bfr}=\frac{1}{d} \sum_{l=0}^{d-1} \biggl(\prod^{n}_{k=1} X_kZ_k^{r_k}\biggr)^l.
\end{equation}
Denote the outcome of party $k$  by an integer $o_k\in \bbZ_d$ corresponding to the eigenvalue $\omega^{o_k}$ of $XZ^{r_k}$; then the test is passed if $\sum_k o_k=0 \mod d$,
so that $\prod^{n}_{k=1} X_kZ_k^{r_k}$ has eigenvalue 1.
The following lemma is the analog of \lref{lem:TestProjAdmGHZ2} for the qudit case;
the proof is also similar and thus omitted.
\begin{lemma}\label{lem:TestProjGHZd}
Suppose $d$ is an odd prime.	Then the GHZ state $|\GHZ_n^d\>$ has $1+d^{n-1}$ admissible Pauli measurements and $1+d^{n-1}$ admissible test operators. Except for the test projector $P_0$ in \eref{eq:P0}, all other admissible test operators have the form in \eref{eq:Pother(d=p)} with $\sum_k r_k=0\mod d$.
\end{lemma}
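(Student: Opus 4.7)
The plan is to parallel the proof of \lref{lem:TestProjAdmGHZ2}. Since any test operator $E$ associated with a Pauli measurement $M$ satisfies $E \ge P_M$ for its canonical test projector, every admissible test operator must itself be a canonical test projector arising from an admissible Pauli measurement. The task therefore reduces to classifying admissible Pauli measurements on $|\GHZ_n^d\>$.

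Because $d$ is prime, every single-qudit Pauli measurement (up to phase) falls into one of $d+2$ types: the trivial identity $I$ (no measurement), the $Z$-basis, or an $XZ^r$-basis for some $r\in\bbZ_d$. I would split the candidate measurements $M=(M_1,\ldots,M_n)$ into four cases: (i) some $M_k=I$; (ii) every $M_k=Z$; (iii) every $M_k=XZ^{r_k}$ (all $X$-like); and (iv) mixed, with $m\ge1$ parties measuring $Z$ and $n-m\ge1$ parties measuring an $XZ$-like operator. Case (ii) directly produces the canonical projector $P_0$ of \eref{eq:P0}. In case (iii), writing $s=\sum_k r_k$, a short computation using $(XZ^r)^l=\omega^{-rl(l-1)/2}X^lZ^{rl}$ gives $(\prod_k X_kZ_k^{r_k})|\GHZ_n^d\>=|\GHZ_n^d\>$ iff $s\equiv0\pmod d$; in that event $|\GHZ_n^d\>$ lies in the $+1$ eigenspace of $\prod_k X_kZ_k^{r_k}$ and the canonical projector is exactly $P_\bfr$ of \eref{eq:Pother(d=p)}.

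The remaining cases will be shown inadmissible. In case (i), replacing the identity on any party by $Z$ leaves the canonical projector unchanged or shrinks it, and a further replacement by a properly chosen measurement strictly shrinks it, so an $I$ factor cannot appear in an admissible measurement. In case (iv), conditioning on the $m$ $Z$-outcomes being mutually equal to some $j$ forces the remaining $n-m$ qudits into $|j\>^{\otimes(n-m)}$, which has uniform nonzero overlap with every $XZ^{r_k}$-eigenstate; the canonical projector is therefore $\sum_j(|j\>\<j|)^{\otimes m}\otimes I^{\otimes(n-m)}$, which strictly dominates $P_0$. The main obstacle is case (iii) with $s\ne0$: I must show that every outcome has nonzero overlap with $|\GHZ_n^d\>$, forcing the canonical projector to equal $I$. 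First compute $\<\GHZ_n^d|(\prod_k X_kZ_k^{r_k})^l|\GHZ_n^d\>$ and use $\sum_j\omega^{lsj}=0$ whenever $ls\not\equiv0\pmod d$ to show that this expectation vanishes for $l=1,\ldots,d-1$, so $|\GHZ_n^d\>$ is equally distributed across all $d$ eigenspaces of the product operator. Then, using the explicit $XZ^r$-eigenstate phases $\omega^{-oj+rj(j-1)/2}$ (which are well defined because $d$ is odd), the amplitude of $|\GHZ_n^d\>$ on each individual joint eigenstate reduces to a quadratic Gauss sum $\sum_j\omega^{ej-sj(j-1)/2}$; completing the square and invoking the classical evaluation $|\sum_j\omega^{-sj^2/2}|=\sqrt d$ for $d$ an odd prime and $s\not\equiv0\pmod d$ shows this sum is nonzero. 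Hence every outcome contributes, the canonical projector collapses to $I$, and the measurement is strictly dominated by $P_0$.

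To close, I would verify pairwise non-domination among $P_0$ and the projectors $P_\bfr$ with $s=0$: these are projectors onto the $+1$ eigenspaces of distinct stabilizers of $|\GHZ_n^d\>$, and exhibiting explicit witnesses (e.g.\ $|j\>^{\otimes n}\in\mathrm{range}(P_0)$ is not a $+1$ eigenvector of $\prod_k X_kZ_k^{r_k}$, while product states built from $XZ^{r_k}$-eigenstates lie in $\mathrm{range}(P_\bfr)$ but generally not in $\mathrm{range}(P_0)$) rules out any inclusion $P_0\le P_\bfr$ or $P_\bfr\le P_{\bfr'}$. Counting then gives $Z^n$ (one measurement) plus $|\{\bfr\in\bbZ_d^n:\sum_kr_k\equiv0\pmod d\}|=d^{n-1}$ measurements of type (iii) with $s=0$, for the claimed total of $1+d^{n-1}$ admissible Pauli measurements and the same number of admissible test operators.
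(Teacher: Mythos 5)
Your proposal is correct and follows essentially the same route as the paper, which omits the proof of \lref{lem:TestProjGHZd} on the grounds that it parallels the proof of \lref{lem:TestProjAdmGHZ2}: you reduce to canonical test projectors, classify complete single-qudit Pauli bases (possible since $d$ is prime), show measurements containing $I$ or mixing $Z$ with $XZ^r$-type bases are dominated, and identify the all-$Z$ and all-$XZ^{r_k}$ ($\sum_k r_k=0$) cases as the admissible ones. Your quadratic Gauss-sum argument for $\sum_k r_k\neq 0$ is exactly the qudit analog of the odd-$|\scrY|$ step in the qubit proof (and correctly uses that $d$ is an odd prime), so the only things left to spell out are routine: the uniform nonzero overlap of $|\GHZ_n^d\>$ with every $\sum_k o_k=0$ outcome (so the canonical projector is exactly $P_{\bfr}$) and the strict shrinking step for incomplete measurements.
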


Our verification protocol is composed of all $1+d^{n-1}$ admissible tests based on Pauli measurements. The test $P_0$ is performed with probability $1/(d+1)$ and the other $d^{n-1}$ tests are performed with  probability  $1/[(d+1)d^{n-2}]$ each.
The resulting verification operator is homogeneous and has the form
\begin{equation}\label{eq:Omega2}
\Omega_{\2} := \frac{1}{d+1}\biggl(P_0+\frac{1}{d^{n-2}}\sum_{\bfr}P_{\bfr}\biggr)=\frac{\openone+d|\GHZ_n^d\>\<\GHZ_n^d|}{d+1},
\end{equation}
where the second equality is proved in Appendix~\ref{app:ProofsEquations}.
We have $\beta(\Omega_{\2})=1/(d+1)$, and
\begin{equation}\label{eq:OptnuNQudit}
\nu(\Omega_{\2})=\frac{d}{d+1},\qquad N(\Omega_{\2})\approx \frac{d+1}{d\epsilon} \ln \delta^{-1}.
\end{equation}
Similar to the  qubit case, this protocol is optimal among all protocols based on separable measurements.
In addition, it is  essentially the unique optimal protocol based on Pauli measurements;  the number $1+d^{n-1}$ of  measurement settings is the smallest possible. Proposition~\ref{pro:GHZdunique} below
generalizes \pref{pro:GHZ2unique} to the qudit case. Its proof is a simple analog of the counterpart  for the qubit case and is thus omitted.
As in the qubit case, there is some freedom in choosing the Pauli group, and here we focus on the canonical Pauli group generated by the operators $Z$ and $X$ defined in  \eref{eq:Z,X} for each qudit.

\begin{proposition}\label{pro:GHZdunique}
	Suppose  $\Omega$ is an optimal  verification strategy with $\nu(\Omega)=d/(d+1)$ for $|\GHZ_n^d\>$ that is based on Pauli measurements, where $d$ is an odd prime. Then $\Omega=\Omega_{\2}$; in addition,
	$\Omega$ is composed of admissible tests with the same probabilities  as in 	 $\Omega_{\2}$.
\end{proposition}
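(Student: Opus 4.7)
The plan is to imitate the qubit proof (\pref{pro:GHZ2unique}) almost verbatim, now with admissible tests parameterized by strings $\bfr\in\bbZ_d^n$ satisfying $\sum_k r_k=0\bmod d$ (via \lref{lem:TestProjGHZd}) instead of even-weight strings in $\{X,Y\}^n$. A standard preliminary reduces the analysis to admissible tests: if $\Omega$ includes a non-admissible projector $E$ with positive probability, then any $|\psi\>\in\mathrm{range}(E)\cap|\GHZ_n^d\>^{\perp}$ forces a positive contribution to $\<\psi|\Omega|\psi\>$ that is incompatible with $\beta(\Omega)=1/(d+1)$ at optimality, so I may write
\begin{equation*}
\Omega = p_0P_0 + \sum_{\bfr}p_{\bfr}P_{\bfr},\qquad p_0,p_{\bfr}\geq 0,\ p_0+\sum_{\bfr}p_{\bfr}=1,
\end{equation*}
and aim for $p_0=1/(d+1)$ and $p_{\bfr}=1/[(d+1)d^{n-2}]$ uniformly.

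The core step is a two-sided squeeze on $p_0$. For the lower bound I probe $\Omega$ with a computational basis vector $|\mathbf{j}\>$ whose components are not all equal, which is orthogonal to $|\GHZ_n^d\>$. A direct computation gives $\<\mathbf{j}|P_0|\mathbf{j}\>=0$ and $\<\mathbf{j}|P_{\bfr}|\mathbf{j}\>=1/d$: the latter because $\prod_k X_kZ_k^{r_k}$ sends $|\mathbf{j}\>$ to a phase times $|\mathbf{j}+l\mathbf{1}\>\neq|\mathbf{j}\>$ for $l=1,\ldots,d-1$, so only the $l=0$ term in \eref{eq:Pother(d=p)} has nonzero diagonal. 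Substituting into $\<\mathbf{j}|\Omega|\mathbf{j}\>\leq 1/(d+1)$ yields $p_0\geq 1/(d+1)$. For the upper bound I restrict to the $d$-dimensional subspace $\mathcal{S}=\mathrm{span}\{|k,\ldots,k\>:k\in\bbZ_d\}$: since $\sum_k r_k=0$, each $\prod_k X_kZ_k^{r_k}$ restricts on $\mathcal{S}$ to the cyclic shift $|k,\ldots,k\>\mapsto|k+1,\ldots,k+1\>$, so each $P_{\bfr}$ projects $\mathcal{S}$ onto the one-dimensional span of $|\GHZ_n^d\>$ while $P_0$ acts as the identity on $\mathcal{S}$. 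Hence every vector in $\mathcal{S}\cap|\GHZ_n^d\>^{\perp}$ is an eigenvector of $\Omega$ at eigenvalue $p_0$, forcing $p_0\leq 1/(d+1)$ and thus $p_0=1/(d+1)$.

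With $p_0$ pinned, the first inequality saturates, so every non-constant $|\mathbf{j}\>$ is also an eigenvector of $\Omega$ at eigenvalue $1/(d+1)$. Together with the $(d-1)$-dimensional eigenspace from $\mathcal{S}\cap|\GHZ_n^d\>^{\perp}$ this exhausts the $(d^n-1)$-dimensional complement of $|\GHZ_n^d\>$, so $\Omega$ acts as $\frac{1}{d+1}\openone$ on $|\GHZ_n^d\>^{\perp}$, giving $\Omega=\Omega_{\2}$ as operators. To force the individual probabilities I would match Pauli-basis coefficients: the stabilizer expansion $|\GHZ_n^d\>\<\GHZ_n^d|=d^{-n}\sum_{S\in\mathrm{Stab}}S$ gives each non-identity stabilizer of $\Omega_{\2}$ the coefficient $1/[(d+1)d^{n-1}]$, whereas each $P_{\bfr}$ contributes the stabilizer $X^{\otimes n}\prod_k Z_k^{r_k}$ (its $l=1$ term) with coefficient $p_{\bfr}/d$, and these $l=1$ stabilizers are pairwise distinct as $\bfr$ ranges over admissible strings. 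Matching then forces $p_{\bfr}=1/[(d+1)d^{n-2}]$.

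The part I expect to be most delicate is the admissibility reduction at the outset---making precise why any non-admissible Pauli test introduces a $|\psi\>\perp|\GHZ_n^d\>$ on which $\Omega$ must strictly exceed $1/(d+1)$---since once this is in hand the remaining arguments are structural analogs of those in \pref{pro:GHZ2unique} that go through unchanged under the replacements dictated by \lref{lem:TestProjGHZd}.
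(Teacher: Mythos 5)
Your core argument for the admissible-only case is correct and is a genuinely different route from the paper's: you pin $p_0$ by probing with non-constant computational basis vectors ($\<\mathbf{j}|\Omega|\mathbf{j}\>=(1-p_0)/d\le\beta(\Omega)$) and with the subspace spanned by $|k\>^{\otimes n}$ (on which every $P_{\bfr}$ projects onto $|\GHZ_n^d\>$, giving eigenvalue $p_0$), then use Rayleigh-quotient saturation to get $\Omega=\Omega_{\2}$ and Pauli-coefficient matching for the probabilities; the paper instead uses the trace inequality $\beta(\Omega)\geq[\tr(\Omega)-1]/(d^n-1)$ combined with $p_0\le\beta(\Omega)$ and linear independence of the canonical projectors. (Your coefficient matching is sound once one also observes that an $l$-th power term of any $P_{\bfr'}$ carries $X$-exponent $l$ on every site, so no $l\neq1$ term and no $Z$-type term of $P_0$ can contribute to the Pauli label of $\prod_k X_kZ_k^{r_k}$; only the $l=1$ term of $P_{\bfr}$ itself does.)

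The genuine gap is exactly the step you flag: the reduction to admissible tests. Your proposed mechanism does not work as stated. If a non-admissible test operator $E$ appears with probability $q$, a vector $|\psi\>\in\mathrm{range}(E)\cap|\GHZ_n^d\>^{\perp}$ only gives $\<\psi|\Omega|\psi\>\geq q\,\<\psi|E|\psi\>$, which is in no tension with $\beta(\Omega)=1/(d+1)$; indeed the admissible projectors $P_{\bfr}$ themselves have rank $d^{n-1}$ and are passed with certainty by many states orthogonal to the target, yet $\Omega_{\2}$ attains $\beta=1/(d+1)$. Note also that non-admissible test operators built from Pauli measurements need not be projectors, so restricting to "a non-admissible projector $E$" does not cover the general case. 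This reduction cannot be waved through: it is needed for your explicit computations (which use the exact forms of $P_0$ and $P_{\bfr}$) and it is itself part of the proposition's conclusion ("composed of admissible tests"). The paper closes it differently, by a replacement argument (as in the proof of \pref{pro:GHZ2unique}, with \lref{lem:TestProjGHZd} in place of Lemma~\ref{lem:TestProjAdmGHZ2}): replace every non-admissible test operator $E$ by an admissible canonical projector $P\leq E$ with $\tr(P)<\tr(E)$, obtaining $\Omega'\leq\Omega$ with $\tr(\Omega')<\tr(\Omega)$; then $\beta(\Omega')\leq\beta(\Omega)=1/(d+1)$ forces $\Omega'$ to be optimal, hence $\Omega'=\Omega_{\2}$ by the first part; finally $\Omega-\Omega'$ is a nonzero positive operator annihilating $|\GHZ_n^d\>$, hence supported on $|\GHZ_n^d\>^{\perp}$, and since $\Omega'$ is homogeneous this yields $\beta(\Omega)>1/(d+1)$, contradicting the hypothesis. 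Supplying an argument of this kind (or an equivalent one) is necessary before your proof is complete.
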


\subsection{Alternative optimal protocol based on 2-designs}
When the local dimension $d$ is not necessarily a prime, we can still devise optimal protocols for verifying GHZ states by virtue of  (weighted complex projective) 2-designs \cite{Renes04, Scott06, RoyS07}.
Let $\{\caB_h\}^{m}_{h=0}$ be $m+1$ bases on the Hilbert space of dimension $d$, where $\caB_0$ is the standard basis, and each basis $\caB_h$ for $h=1,2,\ldots,m$ is composed of $d$ kets of the form
\begin{equation}\label{eq:2designGood}
|\psi_{h t}\>=\frac{1}{\sqrt{d}} \sum_{j=0}^{d-1} \rme^{\rmi \theta_{h t j}} |j\>,\quad\
\theta_{h t j}=2\pi \biggl[\frac{tj}{d}+\frac{h\tbinom{j}{2}}{m}\biggr]
\end{equation}
for $t\in \bbZ_d$.
Let  $w_0=1/(d+1)$ and $w_h=d/[m(d+1)]$ for  $h=1,2,\ldots,m$, and let
$\{\caB_h, w_h\}^{m}_{h=0}$ be a weighted set of kets with weight $w_h$ for all kets in basis $h$. When $d\geq 3$ and $m\geq\lceil\frac{3}{4}(d-1)^2\rceil$, the set $\{\caB_h, w_h\}^{m}_{h=0}$ forms a 2-design according to \rcite{RoyS07}.
Define
\begin{equation}
W:=\diag\big(\mu^0,\mu^1,\dots,\mu^{d-2},\mu^{-(d-1)(d-2)/2}\big),
\end{equation}
where $\mu=\rme^{2\pi\rmi/m}$ is a primitive $m$th root of unity.
Then $|\psi_{ht}\>$ is an eigenstate of $XW^h$ with eigenvalue $\omega^{-t}$ as shown in Appendix~\ref{app:ProofOmega3}, that is,
\begin{equation}\label{eq:XW^h}
XW^h=\sum_{t\in \bbZ_d} \omega^{-t}|\psi_{ht}\>\<\psi_{ht}|.
\end{equation}

When $d\geq 3$, by virtue of the 2-design $\{\caB_h, w_h\}^{m}_{h=0}$ we can construct an optimal protocol using  $1+m^{n-1}$ distinct tests.
The first test is still the standard test $P_0$ as given in \eref{eq:P0}.
Each of the other tests is specified by a string $\bfh\in \{1,2,\ldots,m\}^n$ with $\sum_k h_k=0\mod m$, which means
party $k$ (for $k=1,\dots,n$) performs the projective measurement on the basis $\caB_{h_k}$. The  outcome of party $k$ is denoted by $o_k\in \bbZ_d$, which corresponds to the ket $|\psi_{h_ko_k}\>$ and the eigenvalue $\omega^{-o_k}$ of $XW^h$.
The test is passed if $\sum_{k} o_k =0\mod d$, and  the test projector reads
\begin{equation}\label{eq:Pother}
P_{\bfh}=\frac{1}{d}\sum_{l=0}^{d-1}\biggl(\prod^{n}_{k=1} X_kW_k^{h_k}\biggr)^l.
\end{equation}
Note that all eigenvalues of $\prod^{n}_{k=1} X_kW_k^{h_k}$ are powers of $\omega$ according to \eref{eq:XW^h}, so $P_{\bfh}$ is the projector onto the eigenspace with eigenvalue 1. In addition, the target state $|\GHZ_n^d\>$ is stabilized by $\prod_{k=1}^n X_kW_k^{h_k}$ given the assumption  $\sum_k h_k=0\mod m$ and so can pass the test with certainty as desired.

We perform the test $P_0$ with probability $1/(d+1)$ and the other $m^{n-1}$ tests with probability $d/[(d+1)m^{n-1}]$ each.
The verification operator reads [cf.~$\Omega_{\2}$ in \eref{eq:Omega2}]
\begin{equation}\label{eq:Omega3}
\Omega_{\3}:=\frac{1}{d+1}\biggl(P_0+\frac{d}{m^{n-1}}\sum_{\bfh}P_{\bfh}\biggr)=\frac{\openone+d|\GHZ_n^d\>\<\GHZ_n^d|}{d+1},
\end{equation}
where the second equality is proved in Appendix~\ref{app:ProofOmega3}. This protocol is optimal among all protocols based on separable measurements. Compared with the protocol based on Pauli measurements, this protocol applies to GHZ states of any local dimension $d$ with $d\geq 3$, although it
requires more measurement settings. In addition, this protocol is the only homogeneous protocol for general GHZ states beyond qubit systems.

\begin{figure}
\begin{center}
	\includegraphics[width=7.5cm]{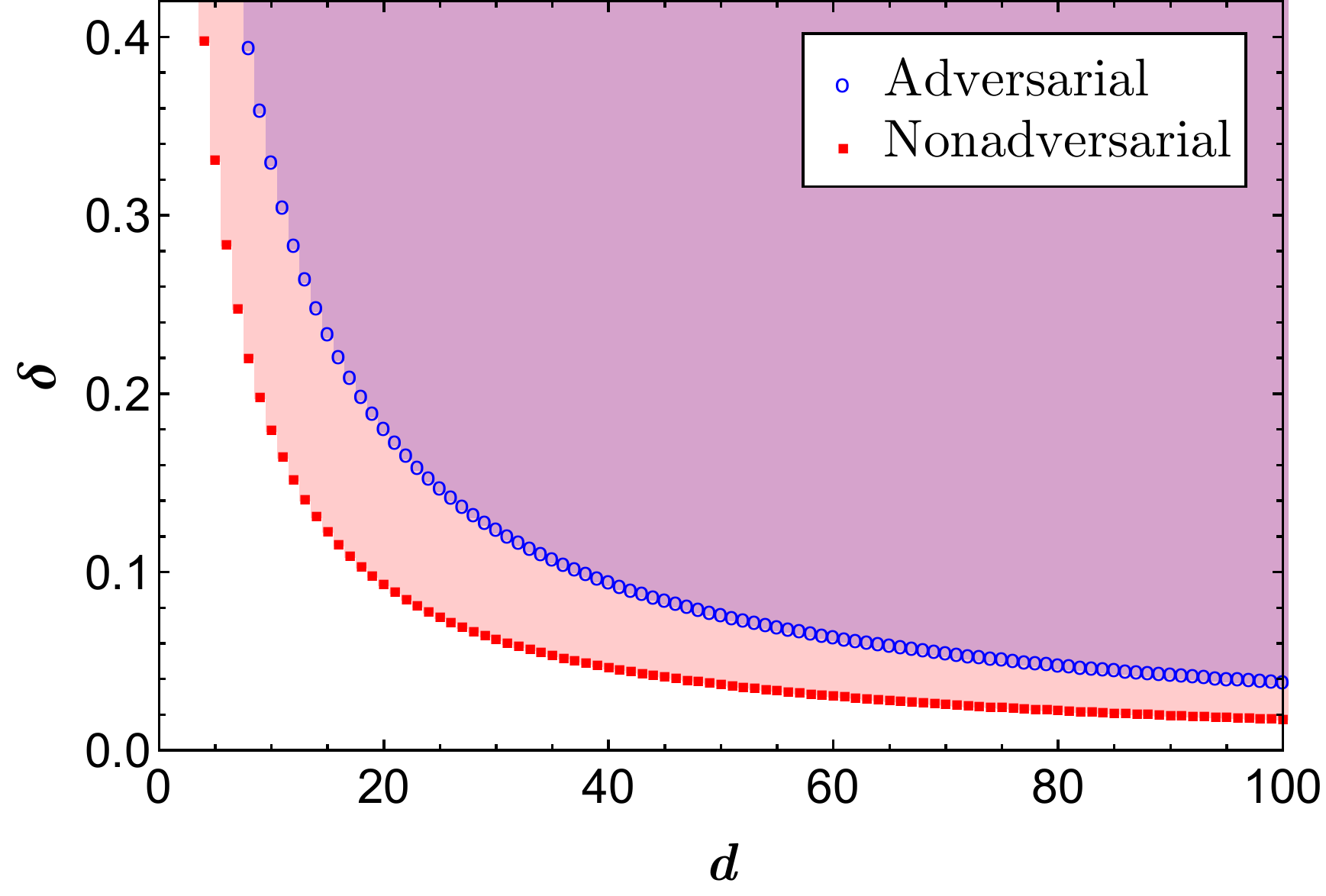}
	\caption{\label{fig:Sig} Certification of the GME of the $n$-qudit GHZ state in the adversarial scenario and the nonadversarial scenario using only one test. Here $d$ is the local dimension;  the  significance level
$\delta$ associated with the shaded region is achievable.  The homogeneous strategy $\Omega$ with $\beta(\Omega)=1/(d+1)$ [$\beta(\Omega)=2/(d+1)$] is applied to the nonadversarial scenario (adversarial scenario).
}
\end{center}
\end{figure}

\subsection{\label{sec:GME}Efficient certification of GME}
A quantum state $\rho$ is genuinely multipartite entangled (i.e., GME) if its fidelity with the GHZ state $\tr(\rho|\GHZ_n^d\>\<\GHZ_n^d|)$ is larger than $1/d$ \cite{Guhne09}.
To certify the GME of the qudit GHZ state with significance level $\delta$ using a given verification strategy $\Omega$, the number of tests is determined by \eref{eq:NumberTest} with $\epsilon=(d-1)/d$.
 If, in addition, $\Omega$ is the optimal local strategy with $\nu(\Omega)=d/(d+1)$, then this number reads
\begin{equation}
N_\mathrm{E}=\biggl\lceil\frac{ \ln \delta}{\ln2-\ln(d+1)}\biggr\rceil.
\end{equation}
We have  $N_\mathrm{E}=1$ when $d\geq2\delta^{-1}-1$, so the GME of the GHZ state
can be certified with any given significance level using only one test when the local dimension $d$ is sufficiently large, as illustrated in Fig.~\ref{fig:Sig}.
Compared with previous approaches for detecting GME that are based on witness operators \cite{TothG05,Zhou19},
our approach requires much fewer measurements.
Although single-copy entanglement detection is known before \cite{Dimic18,ZhuH19O}, single-copy detection of GME is still quite surprising, because it is much more difficult to demonstrate GME than just entanglement.

In sharp contrast, the previous verification protocols proposed in \rscite{PLM18,ZhuH19E} cannot certify the GME of GHZ states using a single test whenever $\delta\leq 1/2$ (the parameter range of practical interest).
To be specific,  the strategy $\Omega_{\rm PLM}$ in   \rcite{PLM18} only applies to the qubit case and has a spectral gap
 $\nu=2^{n-1}/(2^n-1)$. To certify the GME with significance level $\delta$, the number of tests required reads
\begin{equation}
N_\mathrm{E}(\Omega_{\rm PLM})=\biggl\lceil\frac{ \ln \delta}{\ln[1-2^{n-2}/(2^n-1)]}\biggr\rceil,
\end{equation}
so the GME cannot be certified using a single test when $\delta<5/7$ (for $n\geq 3$). The strategy $\Omega_{\rm ZH}$ in \rcite{ZhuH19E} applies to the qudit case and  has spectral gap $\nu=1/2$. The number of tests required reads
\begin{equation}
N_\mathrm{E}(\Omega_{\rm ZH})=\biggl\lceil\frac{ \ln \delta}{\ln(d+1)-\ln(2d)}\biggr\rceil,
\end{equation}
so  the GME cannot be certified using a single test when  $\delta\leq1/2$, irrespective of the local dimension $d$.
For example, to certify the GME of the GHZ state  with significance level $\delta=0.01$ (0.001),  the strategy $\Omega_{\rm PLM}$  in   \rcite{PLM18} requires  at least 14 (21) tests, while the strategy $\Omega_{\rm ZH}$ in \rcite{ZhuH19E}  requires  at least
 7 (10) tests. These observations demonstrate that our protocols are much more efficient than previous protocols for certifying GME.

\section{Verification of GHZ-like states}
Next, consider the GHZ-like states
\begin{equation}\label{eq:GHZlike}
|\xi\>=\sum_{j=0}^{d-1}\lambda_j|j\>^{\otimes n},
\end{equation}
where the coefficients $\lambda_j$
have decreasing order $1> \lambda_0\geq\lambda_1\geq \cdots \lambda_{d-1}\geq 0$ and satisfy  $\sum_{j=0}^{d-1}\lambda_j^2 = 1$.
Such states are of interest to quantum state sharing \cite{Gordon06} and foundational  studies on nonlocality \cite{Zukow02,Cerece04}.
They are also useful in improving signal-to-noise ratios in interferometry \cite{Leib04}
and enhancing signal amplitudes of the electronic spin readout \cite{Jiang09}.

\subsection{Simplest protocol for verifying  GHZ-like states}
We first show that the GHZ-like state $|\xi\>$ can be verified efficiently
using only two distinct tests constructed from mutually unbiased bases (MUB).
Recall that two bases $\{|\psi_i\>\}_{i=0}^{d-1}$ and $\{|\varphi_j\>\}_{j=0}^{d-1}$  for a Hilbert space of dimension $d$
are mutually unbiased if they satisfy $|\<\psi_i|\varphi_j\>|^2=1/d$ for all $i$ and $j$ \cite{Ivano81, Woot89, Durt10}.
Let $\caB_0$ be the standard computational basis and let  $\caB=\{\ket{u_g}\}_{g\in \bbZ_d}$ be any basis that is unbiased with  $\caB_0$. A simple example of $\caB$ is the Fourier basis $\bigl\{\sum_{j=0}^{d-1} \omega^{g j} |j\>/\sqrt{d}\bigr\}_{g\in \bbZ_d}$ with $\omega=\rme^{ 2\pi \rmi/d}$, which happens to be the eigenbasis of the shift operator $X$ in \eref{eq:Z,X}. The following discussion is independent of the choice of the basis $\caB$ as long as it is unbiased with respect to the standard basis $\caB_0$.

The first test is the standard test $P_0$ in \eref{eq:P0}. For the second test, the first $n-1$ parties perform projective measurements on the basis $\caB$. If they obtain the outcome $\bfg=\{g_1,g_2,\dots,g_{n-1}\}\in \bbZ_d^{n-1}$, then the normalized reduced state of party $n$ reads
\begin{equation}
d^{\frac{n-1}{2}}\bigg(\bigotimes^{n-1}_{k=1}\<u_{g_k}|\bigg)|\xi\>=M|v_{\bfg}\>,
\end{equation}
where
\begin{align}
|v_{\bfg}\>&:=d^{\frac{n-1}{2}}\bigg(\bigotimes^{n-1}_{k=1}\<u_{g_k}|\bigg)|\GHZ_n^d\>, \\
M&:=\sqrt{d} \diag(\lambda_0,\lambda_1,\dots,\lambda_{d-1}). \label{eq:M}
\end{align}
Note that $|v_{\bfg}\>$ has a constant overlap of $1/d$ with each element in the basis $\caB_0$.
Then  party $n$ performs the projective measurement $\{M|v_{\bfg}\>\<v_{\bfg}|M,I-M|v_{\bfg}\>\<v_{\bfg}|M\}$,
where $I$ is the identity operator on the Hilbert space of one qudit.
The test is passed if party $n$ obtains the first outcome (corresponding to $M|v_{\bfg}\>\<v_{\bfg}|M$).  The resulting test projector reads
\begin{equation}
P_1=\sum_{\bfg}\left[\bigotimes^{n-1}_{k=1}(|u_{g_k}\>\<u_{g_k}|)\right]\otimes \big(M|v_{\bfg}\>\<v_{\bfg}|M\big).
\end{equation}
So we have
\begin{equation}
\tr(P_0 P_1)=\frac{1}{d^{n-1}}\sum_{\bfg}\sum_{j=0}^{d-1}|\<j|M|v_{\bfg}\>|^2
=\frac{1}{d^{n-1}}\sum_{\bfg,\,j}\lambda_j^2=1,
\end{equation}
which implies that the two projectors $\bar{P}_0:=P_0-|\xi\>\<\xi|$ and $\bar{P}_1:=P_1-|\xi\>\<\xi|$ have orthogonal supports.

If we perform the two tests $P_0$ and $P_1$ with probability $p$ and $1-p$, respectively, then the
verification operator reads $\Omega_{\4}=pP_0+(1-p)P_1$, with
\begin{align}
\beta(\Omega_{\4})&=\bigl\|\bar{\Omega}_{\4}\bigr\|=\max\{p, 1-p\}\geq\frac{1}{2},
\end{align}
where $\bar{\Omega}_{\4}=\Omega_{\4}-|\xi\>\<\xi|$.
The lower bound is saturated iff $p=1/2$, in which case we have $\Omega_{\4}=(P_0+P_1)/2$.
The corresponding spectral gap $\nu(\Omega_{\4})$ and the number $N(\Omega_{\4})$ of required tests read
\begin{equation}
\nu(\Omega_{\4})=\frac{1}{2},\qquad N(\Omega_{\4})\approx \frac{2}{\epsilon} \ln \delta^{-1}.
\end{equation}
According to \rcite{ZhuH19O}, here the spectral gap attains the maximum among all protocols composed of two distinct local projective tests, so the above protocol is the most efficient among all protocols based on two distinct local projective tests.

\subsection{\label{sec:GHZlikeOpt}Optimal protocol under one-way LOCC}
For a bipartite state $|\zeta\>=\sum_{j=0}^{d-1} \lambda_j |jj\>$ with the same local dimension and coefficients $\lambda_j$ as $|\xi\>$ in  \eref{eq:GHZlike}, the maximum spectral gap of any verification operator based on one-way LOCC is $1/(1+\lambda_0^2)$ \cite{Wang19,Yu19}. The counterpart for the  GHZ-like state $|\xi\>$ cannot be larger. Here we shall demonstrate that this upper bound can be saturated.
When $d\geq3$, our protocol consists of $1+m^{n-1}$ distinct tests with $m\geq\lceil\frac{3}{4}(d-1)^2\rceil$.
The first one is  the standard test in \eref{eq:P0}.
For each of the other tests, the first $n-1$ parties perform projective measurements on the bases $\caB_{h_1}, \caB_{h_2}, \dots, \caB_{h_{n-1}}$ [cf.~\eref{eq:2designGood}], respectively, where $h_1, h_2, \ldots, h_{n-1}\in \{1,2,\ldots,m\}$. After receiving the outcomes $o_1,o_2,\dots,o_{n-1}\in \bbZ_d$ of these measurements, we choose $h_n,o_n$ to satisfy the conditions $\sum_{k=1}^n h_k=0\mod m$ and $\sum_{k=1}^n  o_k=0\mod d$. Then  party $n$ performs the projective measurement $\{MP_{h_n o_n}M,I-MP_{h_n o_n}M\}$, where $P_{h_k o_k}=|\psi_{h_k o_k}\>\<\psi_{h_k o_k}|$ and $M$ is defined in \eref{eq:M}.
The test is passed if party $n$ obtains the first outcome (corresponding to $MP_{h_n o_n}M$), and the  test projector reads
\begin{equation}\label{eq:Pother'}
P'_{\bfh}=\big(I^{\otimes (n-1)}\otimes M\big)P_{\bfh}\big(I^{\otimes (n-1)}\otimes M\big),
\end{equation}
where $P_{\bfh}$ is the test projector  in \eref{eq:Pother}.

Suppose we perform the test $P_0$ with probability $p$ and each of the other tests with probability $(1-p)/m^{n-1}$;
then the verification operator reads
\begin{equation}\label{eq:Omega4}
\Omega_{\5}=p P_0+ (1-p)\Pi,
\end{equation}
where
\begin{equation}\label{eq:Pi1}
\Pi:=\frac{\sum_{\bfh}P'_{\bfh}}{{m^{n-1}}}
=|\xi\>\<\xi|+I^{\otimes (n-1)}\otimes \rho_{n}-\sum_{j=0}^{d-1}\lambda_j^2(|j\>\<j|)^{\otimes n},
\end{equation}
with $\rho_{n}=\tr_{1,2,\dots,n-1}(|\xi\>\<\xi|)=\sum_{j=0}^{d-1} \lambda_j^2 |j\>\<j|$ being the reduced state for party $n$. Here the second equality follows from Eqs.~\eqref{eq:Omega3}  and \eqref{eq:Pother'} [cf.~\eref{eq:sumPh} in Appendix~\ref{app:ProofOmega3}].
Note that $\bar{\Pi}=\Pi-|\xi\>\<\xi|$ and $\bar{P}_0=P_0-|\xi\>\<\xi|$ are orthogonal; we conclude that
\begin{align}
\beta(\Omega_{\5})&=\bigl\|\bar{\Omega}_{\5}\bigr\|=\max\{p,(1-p)\|\bar{\Pi}\|\}\nonumber\\
&=\max\{p,(1-p)\lambda_0^2\}\geq\frac{\lambda_0^2}{1+\lambda_0^2}.\label{eq:betaOmega4}
\end{align}
The  bound is saturated iff $p={\lambda_0^2}/(1+\lambda_0^2)$, in  which case we have
\begin{equation}\label{eq:nuOmega5}
\nu(\Omega_{\5})=\frac{1}{1+\lambda_0^2},\qquad N(\Omega_{\5})\approx \frac{1+\lambda_0^2}{\epsilon} \ln \delta^{-1}.
\end{equation}
Therefore, this protocol is optimal among all protocols based on one-way LOCC.

When the local dimension $d$ is a prime, the number of distinct tests required for constructing the optimal protocol can be reduced to $1+d^{n-1}$.
Take the qubit case for example. The first test is still the standard test $P_0$. For each of the other tests, the first $n-1$ parties perform either $X$ or $Y$ measurements. Then party $n$ performs the projective measurement $\{|v\>\<v|,I-|v\>\<v|\}$, where $|v\>$ is the normalized reduced state of party $n$ depending on the outcomes of the first $n-1$ parties. The test is passed if party $n$ obtains  the first outcome (corresponding to $|v\>\<v|$). The test projector has the form
\begin{equation}\label{eq:PY'}
P'_{\scrY}=\big(I^{\otimes (n-1)}\otimes M\big)P_{\scrY}\big(I^{\otimes (n-1)}\otimes M\big),
\end{equation}
where $\scrY\subset \{1,2,\ldots, n\}$ has even cardinality  and $P_{\scrY}$ is the test projector  in \eref{eq:PY}. Suppose we perform the test $P_0$ with probability $p$ and each of the other tests with probability $(1-p)/2^{n-1}$;
 then the verification operator reads
\begin{equation}
\Omega'_{\5}=p P_0+ \frac{1-p}{{2^{n-1}}}\sum_{\scrY}P'_{\scrY}.
\end{equation}
Again,  the maximum spectral gap $\nu(\Omega'_{\5})=1/(1+\lambda_0^2)$ is attained when $p={\lambda_0^2}/(1+\lambda_0^2)$. When $d$ is an odd prime,  more details can be found
in Appendix~\ref{app:StratGHZ-like}.

\subsection{\label{sec:GHZlikeTwoWay}Improved protocol based on more communications}
The above protocol for verifying GHZ-like states can be improved further if more communications are allowed.
Let $\Omega_k$ ($k=1,2,\dots, n$) be the strategy defined according to \eref{eq:Omega4}, but with the roles of party $k$ and party $n$ interchanged; that is, the measurement performed by party $k$ depends on the measurement outcomes of the other $n-1$ parties.
Then we can construct a new strategy by applying  $\Omega_1,\Omega_2,\dots,\Omega_n$ with probability $1/n$ each, and the resulting verification operator reads
\begin{equation}\label{eq:Omega5}
\Omega_{\6}=\frac{1}{n}\sum_{k=1}^n\Omega_k=p P_0+ (1-p)\frac{1}{n}\sum_{k=1}^n\Pi_k.
\end{equation}
Here the operator  $\Pi_k$ is derived from $\Pi$ in \eref{eq:Pi1}  by replacing $I^{\otimes (n-1)}\otimes \rho_{n}$ with $R_k:=I^{\otimes (k-1)}\otimes \rho_{k}\otimes I^{\otimes (n-k)}$, where  $\rho_{k}=\sum_{j=0}^{d-1} \lambda_j^2 |j\>\<j|$. We have
\begin{gather}
\frac{1}{n}\sum_{k=1}^n\Pi_k
=|\xi\>\<\xi|+\frac{1}{n}\sum_{k=1}^n R_k-\sum_{j=0}^{d-1}\lambda_j^2(|j\>\<j|)^{\otimes n}, \label{eq:Gamma}\\
\biggl\|\frac{1}{n}\sum_{k=1}^n\Pi_k
-|\xi\>\<\xi|\biggr\|=\frac{(n-1)\lambda_0^2+\lambda_1^2}{n}.
\end{gather}
Therefore,
\begin{align}
\beta(\Omega_{\6})&=\max\bigl\{p,(1-p)n^{-1}[(n-1)\lambda_0^2+\lambda_1^2]\bigr\} \nonumber\\
&\geq [n+(n-1)\lambda_0^2+\lambda_1^2]^{-1}[(n-1)\lambda_0^2+\lambda_1^2].
\end{align}
The bound is saturated when
\begin{equation}
p=\frac{(n-1)\lambda_0^2+\lambda_1^2}{n+(n-1)\lambda_0^2+\lambda_1^2},
\end{equation}
in which case we have
\begin{equation}
\nu(\Omega_{\6})=\frac{n}{n+(n-1)\lambda_0^2+\lambda_1^2}\geq\nu(\Omega_{\5}).
\end{equation}
The strategy $\Omega_{\6}$ is more efficient than $\Omega_{\5}$ except when $\lambda_1=\lambda_0$, as illustrated in Fig.~\ref{fig:NumT}.

\section{Adversarial scenario}
Finally, we turn to the adversarial scenario, in which the quantum device is controlled by a potentially malicious adversary,
and can produce an arbitrary correlated or entangled state $\rho$ on the whole system $\caH^{\otimes (N+1)}$  \cite{HayaM15,TakeM18}.
By virtue of a general recipe proposed in \rscite{ZhuEVQPSshort19,ZhuEVQPSlong19}, we can verify the target state $|\Psi\>$ in the adversarial scenario by first randomly choosing $N$ systems and then applying a verification strategy $\Omega$ to each system chosen. Note that only one-way communication from the adversary to the verifier is involved. In addition,
usually the choices of the $N$ systems and the specific test for each system chosen can be determined after receiving the state $\rho$. Therefore, the adversary has no information about these choices before sending the state $\rho$, and he/she cannot get any advantage even if these choices are broadcast after sending the state $\rho$. By constructing a suitable strategy $\Omega$, we can make sure with high confidence (low significance level) that the reduced state on the remaining system has fidelity at least $1-\epsilon$ if all $N$ tests are passed.
Efficient state verification in such an adversarial scenario is crucial to quantum secret sharing \cite{Titt01,Chen05} and quantum networks \cite{McCut16}.

If there is no restriction on the accessible measurements, then the optimal strategy can be chosen to be homogeneous \cite{ZhuEVQPSshort19,ZhuEVQPSlong19}.
In the high-precision limit  $\epsilon,\delta\rightarrow 0$,  the minimal number of tests required to verify $|\Psi\>$ within infidelity $\epsilon$ and significance level $\delta$ reads   \cite{ZhuEVQPSshort19,ZhuEVQPSlong19} (assuming $\beta(\Omega)>0$)
\begin{equation}\label{eq:NumTestAdv}
N\approx[\beta(\Omega)\epsilon\ln \beta(\Omega)^{-1}]^{-1}  \ln \delta^{-1}.
\end{equation}
This number is minimized when $\beta(\Omega)=1/\rme$, which yields $N\approx \rme\epsilon^{-1} \ln \delta^{-1}$. In addition, this number increases monotonically when $\beta(\Omega)$ deviates from the value $1/\rme$. If $\epsilon,\delta$ are  small but not infinitesimal, say $\epsilon,\delta \leq 0.01$, then the choice
$\beta(\Omega)=1/\rme$ is nearly optimal even if it is not exactly optimal.

Our strategies for verifying the qudit GHZ state $|\GHZ_n^d\>$ are  homogeneous with $\beta(\Omega)=1/(d+1)$.
To construct the optimal verification strategy in the adversarial scenario,
it suffices to add the trivial test with a suitable probability $p$.
The  test operator associated with the trivial test is the identity operator, so all states can pass the test for sure. Let $p=[(d+1)\beta-1]/d$ with $1/(d+1)\leq \beta<1$; then the verification operator reads
\begin{align}
\Omega_{\7}:=&(1-p)\frac{\openone+ d|\GHZ_n^d\>\<\GHZ_n^d|}{d+1}+p\openone \nonumber\\
=&|\GHZ_n^d\>\<\GHZ_n^d|+\beta(\openone-|\GHZ_n^d\>\<\GHZ_n^d|).
\end{align}
Any homogeneous strategy $\Omega$ with $1/(d+1)\leq\beta(\Omega)<1$ can be so constructed using local projective measurements.
In particular, by choosing $p=(d+1-\rme)/(\rme d)$, we can construct the  homogeneous strategy $\Omega_{\7}$ with $\beta(\Omega_{\7})=1/\rme$, which is optimal for high-precision verification in the adversarial scenario (the optimal value may be slightly different when $\epsilon, \delta$ are  small but not infinitesimal). Similarly, we can construct a  homogeneous strategy $\Omega$ with $\beta(\Omega)=2/(d+1)$, with which the GME can be certified in the adversarial scenario using only one test as long as the significance level satisfies $\delta\geq 4d/(d+1)^2$, as illustrated in Fig.~\ref{fig:Sig}. This claim follows from Corollary~6 in \rcite{ZhuEVQPSlong19} with $\epsilon=(d-1)/d$
(see also Theorem~3 in \rcite{ZhuH19O}).
In sharp contrast, previous protocols in \rscite{PLM18, ZhuH19E} cannot certify the GME using a single test
whenever $\delta\leq1/2$ even in the nonadversarial scenario (cf.~Sec.~\ref{sec:GME}), not to mention the adversarial scenario.

\begin{figure}
\begin{center}
	\includegraphics[width=7.5cm]{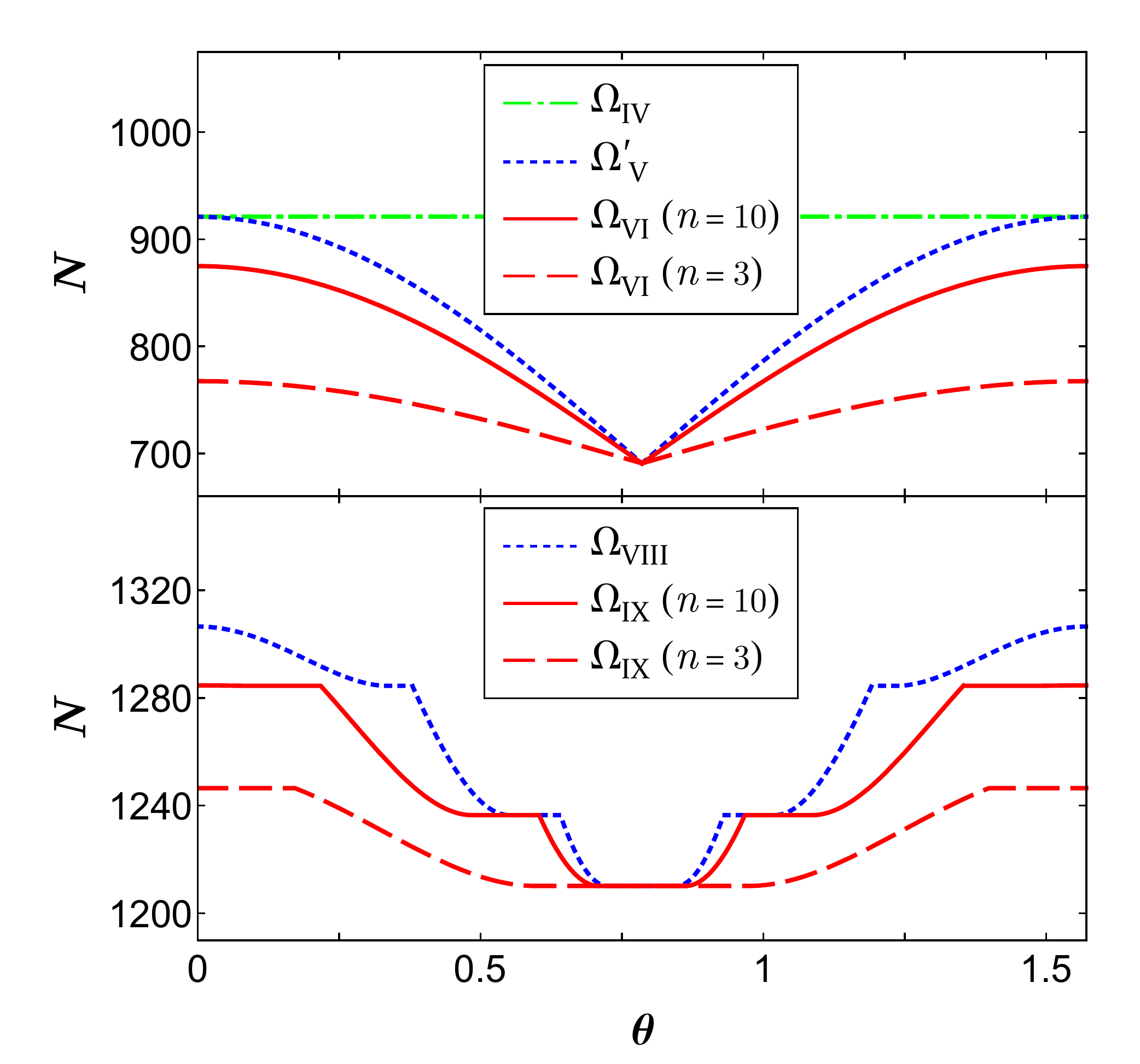}
	\caption{\label{fig:NumT} Efficient verification of the $n$-qubit GHZ-like state $|\xi\>=\cos\theta\ket{0}^{\otimes n}+\sin\theta\ket{1}^{\otimes n}$ in the nonadversarial scenario (upper plot) and the adversarial scenario (lower plot).
Here $N$ is the number of tests required to achieve infidelity $\epsilon=0.01$ and significance level $\delta=0.01$.
Note that $N(\Omega_{\6})$ and $N(\Omega_{\9})$ are dependent on the qubit number $n$, while $N(\Omega_{\4})$, $N(\Omega'_{\5})$, and $N(\Omega_{\8})$ are not.
}
\end{center}
\end{figure}

Next, we  devise a homogeneous strategy for verifying the GHZ-like state $|\xi\>$ in \eref{eq:GHZlike} by modifying the strategy $\Omega_{\5}$ in \eref{eq:Omega4}, which requires one-way communication. Let $\lambda_0^2/(1+\lambda_0^2)\leq p<1$ and  replace the test  projector $P_0$ in \eref{eq:P0} with the following test operator
\begin{equation}
Q_0=P_0+\sum_{\bfj\in \mathscr{B}}\Bigl[1-\Bigl(\frac{1}{p}-1\Bigr)\lambda_{j_n}^2\Bigr]|\bfj\>\<\bfj|,
\end{equation}
where $\mathscr{B}$ denotes the subset of $\bbZ_d^n$ excluding elements $\bfj$ that satisfy $j_1=j_2=\cdots= j_n$. Note that $Q_0$ can be realized by local projective measurements: All $n$ parties perform projective measurements on the standard basis; the test is passed with certainty if they obtain the same outcome, while with probability $1-(p^{-1}-1)\lambda_{j_n}^2$
if they do not obtain the same outcome and party $n$ obtains outcome $j_n$.
Then the verification operator $\Omega_{\5}$ turns into
\begin{equation}
\Omega_{\8}=pQ_0+(1-p)\Pi=|\xi\>\<\xi|+p(\openone-|\xi\>\<\xi|),
\end{equation}
which is homogeneous with $\beta(\Omega)=p$. Here $\Pi$ is defined in \eref{eq:Pi1}. To achieve optimal performance in
high-precision verification in the adversarial scenario, we can choose $p=\max\{\rme^{-1},\lambda_0^2/(1+\lambda_0^2)\}$.  If $\lambda_0^2\leq 1/(\rme-1)$, then  we have $\beta(\Omega)=1/\rme$, so the  homogeneous strategy $\Omega_{\8}$ constructed in this way is optimal even  among strategies that can access entangling measurements. In general, $\Omega_{\8}$ is optimal among all strategies based on one-way LOCC.
Even in the worst case with $\beta(\Omega)=1/2$,  the number of required tests is only
$2(\ln \delta^{-1})/(\epsilon\ln2)$, and the overhead compared with the optimal strategy based on entangling measurements is only about
$6\%$. By contrast, the choice  $p=\lambda_0^2/(1+\lambda_0^2)$ is  optimal for  fidelity estimation.

The strategy $\Omega_{\6}$ in \eref{eq:Omega5} can also be turned into a homogeneous strategy.
Let
\begin{equation}\label{eq:pBoundTwoway}
\frac{(n-1)\lambda_0^2+\lambda_1^2}{n+(n-1)\lambda_0^2+\lambda_1^2}\leq p<1
\end{equation}
and replace the projector $P_0$ with the following operator
\begin{equation}
\tilde{Q}_0=P_0+\sum_{\bfj\in \mathscr{B}}\biggl[1-\frac{1}{n}\Bigl(\frac{1}{p}-1\Bigr)\sum_{k=1}^{n}\lambda_{j_k}^2\biggr]|\bfj\>\<\bfj|,
\end{equation}
which can  be realized by  local projective measurements in analogy to  $Q_0$.
The resulting verification operator  reads
\begin{equation}
\Omega_{\9}=p \tilde{Q}_0+(1-p)\frac{1}{n}\sum_{k=1}^n\Pi_k=|\xi\>\<\xi|+p(\openone-|\xi\>\<\xi|),
\end{equation}
which  is homogeneous with $\beta(\Omega)=p$.
For high-precision verification in the adversarial scenario, the optimal choice of $p$ is
\begin{equation}
p=\max\biggl\{\rme^{-1},\frac{(n-1)\lambda_0^2+\lambda_1^2}{n+(n-1)\lambda_0^2+\lambda_1^2}\biggr\}.
\end{equation}
The resulting  strategy $\Omega_{\9}$ is optimal if
\begin{equation}
(n-1)\lambda_0^2+\lambda_1^2\leq \frac{n}{\rme-1},
\end{equation}
in which case we have $\beta(\Omega)=1/\rme$.
For fidelity estimation,  the optimal choice of $p$ coincides with the lower bound in \eref{eq:pBoundTwoway}, that is,
\begin{equation}
\frac{(n-1)\lambda_0^2+\lambda_1^2}{n+(n-1)\lambda_0^2+\lambda_1^2}.
\end{equation}

Although a lot of random bits are required to construct the above verification protocols,
they can be generated by classical random number generators. In the adversarial scenario  we consider, the adversary  controls the preparation of quantum states, but not the measurement devices used to verify these states, which is in sharp contrast with self-testing \cite{ZhuEVQPSlong19,Mayers04, Supic19}. In addition, there is  only one-way communication from the adversary to the verifier.
Usually the choices of the systems for verification/application and the specific test for each system chosen can be determined after the state is received from the adversary as mentioned in the beginning of this section. Therefore, even pseudo-random number generators like computer programs are sufficient for this task.
Quantum resources are not necessary to generate random bits.

\section{Summary}
We proposed optimal and homogeneous strategies for verifying GHZ states based on local projective measurements. Only Pauli measurements are required when the local dimension is a prime.  These protocols are also surprisingly efficient for
estimating the fidelity and certifying the GME.  In particular, they enable the certification  of the GME with any given significance level using only one test when the local dimension is sufficiently large. Such a high efficiency has never been achieved or even anticipated before. Our results indicate that it is easier to certify GME than thought previously.
We hope that these results will be demonstrated in experiments in the near future.
Moreover, our protocols can be generalized to verify GHZ-like states and can be applied to the adversarial scenario, while retaining a high efficiency.
Our study provides an efficient tool for evaluating the qualities of GHZ states and GHZ-like states prepared  in the lab. Meanwhile, it offers valuable insights into the verification, fidelity estimation, and entanglement certification of multipartite quantum states. In addition, the concepts of admissible measurements/test operators and canonical test projectors we introduced are useful beyond the focus of this work. In the future it would be desirable to generalize our results to other important multipartite quantum states.

\section*{Acknowledgment}
HZ is grateful to Masahito Hayashi for discussions. This work is  supported by  the National Natural Science Foundation of China (Grant No. 11875110).

\appendix

\section{\label{app:ProofProposition}Proofs of \lref{lem:TestProjAdmGHZ2} and \pref{pro:GHZ2unique} }

\begin{proof}[Proof of Lemma~\ref{lem:TestProjAdmGHZ2}]
	To determine admissible Pauli measurements, we need to consider canonical test projectors associated with Pauli measurements. First note that an incomplete Pauli measurement cannot be admissible. To see this, it suffices to consider the case in which the incomplete Pauli measurement has weight $n-1$. After $n-1$ single-qubit Pauli measurements, the reduced states of the remaining party for all possible outcomes are eigenstates of one Pauli operator, so we can obtain a smaller canonical test projector by performing a suitable Pauli measurement on the remaining qubit. Therefore, it suffices to consider canonical test projectors associated with complete Pauli measurements.
	
	Denote by $\scrX, \scrY, \scrZ$ the sets of parties that perform $X, Y, Z$ measurements, respectively. If $|\scrZ|\geq 1$, then the canonical test projector reads
	\begin{equation}\label{eq:TestProjXYZ}
	\openone_{\scrX\cup \scrY}\otimes\Biggl[ \bigotimes_{j\in \scrZ} (|0\>\<0|)_j+\bigotimes_{j\in \scrZ} (|1\>\<1|)_j\Biggr]\geq P_0,
	\end{equation}
	where $\openone_{\scrX\cup \scrY}$ denotes the identity operator associated with parties in ${\scrX\cup \scrY}$, and the subscript $j$ specifies the party on which the operators $|0\>\<0|$ and $|1\>\<1|$ act. The inequality in Eq.~\eqref{eq:TestProjXYZ} is saturated iff all parties perform $Z$ measurements, that is, $|\scrZ|=n$. Note that every test projector of $|\GHZ_n^2\>$ that is based on a Pauli measurement has rank at least 2, so the test projector $P_0$ and the corresponding Pauli measurement are admissible, while other Pauli measurements with $|\scrZ|\geq 1$ and corresponding  test projectors are not admissible.
	
	Next, suppose  each party performs either $X$ or $Y$ measurement, that is, $|\scrZ|=0$ and $|\scrX|+|\scrY|=n$.    If $|\scrY|$ is odd, then the canonical test projector is equal to the identity and so cannot be admissible, given that all states in the measurement basis have nonzero overlaps with $|\GHZ_n^2\>$.  If $|\scrY|$ is even, then the canonical test projector is  $P_\scrY$ given in \eref{eq:PY}, which has rank $2^{n-1}$. Note that $P_\scrY\ngeq P_0$, and there is no other canonical test projector that is smaller than $P_\scrY$. Therefore, all test projectors  $P_\scrY$ with even $|\scrY|$ and corresponding Pauli measurements are admissible.
	
	In summary, there are  $1+2^{n-1}$ admissible Pauli measurements, namely, $Z^n$ and all strings in $\{X,Y\}^n$ with even numbers of $Y$. The corresponding $1+2^{n-1}$ canonical test projectors in  \eqsref{eq:TestProjZn}{eq:PY} are the only admissible test operators.
\end{proof}

\begin{proof}[Proof of \pref{pro:GHZ2unique}]
	To start with, suppose the verification strategy $\Omega$ only consists of admissible test projectors. According to Lemma~\ref{lem:TestProjAdmGHZ2}, $\Omega$ can be expressed as
	\begin{equation}
	\Omega=p_0 P_0 +\sum_\scrY p_\scrY P_\scrY, \quad p_0, p_\scrY\geq0,\; p_0+ \sum_\scrY p_\scrY=1.
	\end{equation}
Now the   assumption $\nu(\Omega)\!=\!2/3$ sets an upper bound for $p_0$, that is, $p_0\leq \beta(\Omega)\!=\!1/3$. Therefore,
	\begin{equation}\label{eq:OmegaTrace}
	\tr(\Omega)=2p_0+2^{n-1}\sum_\scrY p_\scrY=2p_0+2^{n-1}(1-p_0)\geq \frac{2^n+2}{3},
	\end{equation}
	where the inequality is saturated iff $p_0\!=\!1/3$.
	In addition,
	\begin{equation}
	\beta(\Omega)\geq \frac{\tr(\Omega)-1}{2^n-1}
	\geq \frac{2^n-1}{3(2^n-1)}= \frac{1}{3}.
	\end{equation}
	The first inequality is saturated iff $\Omega$ is homogeneous, which means all eigenvalues of $\Omega$ are equal except for the largest one. The second inequality is saturated iff the inequality in  \eref{eq:OmegaTrace} is saturated, which implies that $p_0=1/3$. If $\nu(\Omega)=2/3$, that is,  $\beta(\Omega)=1/3$, then both inequalities are saturated, so that
	\begin{align}\label{eq:OmegaDecom}
	\Omega&= \frac{1}{3}\big(\openone+ 2|\GHZ_n^2\>\<\GHZ_n^2|\big) \nonumber\\
	&=\Omega_{\1}=\frac{1}{3}\biggl(P_0+\frac{1}{2^{n-2}}\sum_{\scrY}P_{\scrY}\biggr).
	\end{align}
	Moreover, the decomposition in the last expression  is unique because the $1+2^{n-1}$ admissible canonical test  projectors are linearly independent in the operator space.
	
	If $\Omega$ consists of some nonadmissible test operators, we can construct a new strategy $\Omega'$ by replacing each nonadmissible test operator $E$ with an admissible test projector $P$ satisfying $P\leq E$ and $\tr(P)< \tr(E)$. Then $\Omega'\leq \Omega$ is an optimal strategy with $\nu(\Omega')=2/3$, which implies that $\Omega'=\Omega_{\1}$  according to the above discussion. In addition,  we have $\nu(\Omega)< \nu(\Omega')=2/3$ since $\Omega'$ is homogeneous and $\tr(\Omega')<\tr(\Omega)$.
	This conclusion contradicts the assumption that $\nu(\Omega)=2/3$, which completes the proof of \pref{pro:GHZ2unique}.
\end{proof}

Incidentally, Proposition~\ref{pro:GHZdunique} can be proved using a similar reasoning used in the proof of \pref{pro:GHZ2unique}. Accordingly, Lemma~\ref{lem:TestProjAdmGHZ2} featuring in the above proof  can be replaced by \lref{lem:TestProjGHZd}, which applies to the qudit case, assuming that $d$ is an odd prime.

\section{\label{app:ProofsEquations}Proofs of \eqsref{eq:Omega1}{eq:Omega2}}
\begin{proof}[Proof of \eref{eq:Omega1}]
Note that the sum of all test projectors $P_{\scrY}$ with $\scrY\subset \{1,\dots,n\} $ of even cardinalities can be expressed as
\begin{align}\label{eq:sumPj}
\sum_{\scrY}P_{\scrY}\!
&= 2^{n-2}\openone+ \frac{1}{2}\!\sum_{t=0}^{\lfloor n/2 \rfloor} \!(-1)^t \sum_{j}\caP_j\bigl\{Y^{\otimes 2t}\!\otimes X^{\otimes (n-2t)}\bigr\}\nonumber\\
&= 2^{n-2}\openone+ \frac{1}{4}\big[(X+ \rmi Y)^{\otimes n}+(X- \rmi Y)^{\otimes n} \big]\nonumber\\
&= 2^{n-2}\big[\openone+(|0\>\<1|)^{\otimes n}+(|1\>\<0|)^{\otimes n}\big],
\end{align}
where $\sum_{j}\caP_j\{Y^{\otimes 2t}\otimes X^{\otimes (n-2t)}\}$ denotes the sum over $\binom{n}{2t}$
distinct  permutations of $Y^{\otimes 2t}\otimes X^{\otimes (n-2t)}$. This equation implies the second equality in \eref{eq:Omega1}.
\end{proof}

\begin{proof}[Proof of \eref{eq:Omega2}]
The sum of all  test projectors $P_{\bfr}$ with $\sum_k r_k=0\mod d$ can be expressed as
\begin{align}\label{eq:sumPr}
\sum_{\bfr}P_{\bfr}
&= d^{n-2}\openone + \frac{1}{d} \sum_{l=1}^{d-1}  \sum_{\bfr}  \prod^{n}_{k=1} \big(X_k Z_k^{r_k}\big)^l\nonumber\\
&= d^{n-2}\openone + \frac{1}{d^2} \sum^{d-1}_{l=1}  \sum_{s\in \bbZ_d}   \bigg[\sum_{r\in \bbZ_d} \omega^{-sr} (XZ^r)^l \bigg]^{\otimes n}\nonumber\\
&= d^{n-2}\openone + \frac{1}{d^2} \sum^{d-1}_{l=1}  \sum^{d-1}_{j=0}  \big( d |j+ l\>\<j| \big)^{\otimes n}\nonumber\\
&= d^{n-2} \biggl[ \openone+\sum_{j'\ne j}(|j'\>\<j|)^{\otimes n} \biggr],
\end{align}
which implies \eref{eq:Omega2}. The first equality is meaningful when $d$ is odd, in which case $(X_k Z_k^{r_k})^d=I$,
where $I$ is the identity operator on the Hilbert space of one qudit. The third equality follows from the following fact:
For each $s\in \bbZ_d$ and $l\in \{1,2,\dots,d-1\}$, we have
\begin{align}
&\sum_{r\in \bbZ_d} \omega^{-sr} (XZ^r)^l
= X^l \sum_{r\in \bbZ_d}  \omega^{r[l(l-1)/2-s]} Z^{rl}\nonumber\\
&= \sum_{j=0}^{d-1} |j+ l\>\<j| \bigg(\sum_{r\in \bbZ_d} \omega^{r[l(l-1)/2+jl-s]} \bigg).
\end{align}
The last term in the parentheses vanishes except when $l(l-1)/2+jl-s=0\mod d$, in which case it equals $d$.
If $d$ is an odd prime and  $l\neq0$, then the equation $l(l-1)/2+jl-s=0\mod d$ for each $s$ has a unique solution for $j\in \bbZ_d$, and the map from $s$ to the solution $j$ is one to one,
so the third equality in \eref{eq:sumPr} holds.

To clarify why the above proof does not work when $d$ is an odd number that is not a prime, suppose $l$ is a divisor of $d$.  Then the equation $l(l-1)/2+jl-s=0\mod d$
has multiple solutions  when $s$
is a multiple of $l$, while it has no solution otherwise, so  the third equality in \eref{eq:sumPr} does not hold in this case. Therefore,  we need to assume that $d$ is an odd prime in order to construct an optimal protocol based on Pauli measurements.
\end{proof}

\section{\label{app:ProofOmega3}Proofs of \eqsref{eq:XW^h}{eq:Omega3}}
\begin{proof}[Proof of \eref{eq:XW^h}]
According to \eref{eq:2designGood}, we have
\begin{equation}
|\psi_{ht}\>=\frac{1}{\sqrt{d}}\sum_{j=0}^{d-1} \rme^{\rmi \theta_{htj}}|j\>=\frac{1}{\sqrt{d}}\sum_{j=0}^{d-1} \omega^{tj}\mu^{h\binom{j}{2}}|j\>.
\end{equation}
Therefore,
\begin{align}
&XW^h \sqrt{d}|\psi_{ht}\>=\sum_{j=0}^{d-2} \omega^{tj}\mu^{h\left[\binom{j}{2}+j\right]}|j+1\>\nonumber\\
&\quad +\omega^{-t}\mu^{h\left[\binom{d-1}{2}-(d-1)(d-2)/2\right]}|0\>\nonumber\\
&=\omega^{-t}\sum_{j=0}^{d-1} \omega^{tj}\mu^{h\binom{j}{2}}|j\>=\omega^{-t}\sqrt{d}|\psi_{ht}\>.
\end{align}	
It follows that $|\psi_{ht}\>$ is an eigenstate of $XW^h$ with eigenvalue $\omega^{-t}$, which implies \eref{eq:XW^h}.
	
Alternatively, \eref{eq:XW^h} can be proved as follows.
\begin{align}\label{eq:XW^hproof}
&\sum_{t\in \bbZ_d} \omega^{-t}(|\psi_{ht}\>\<\psi_{ht}|)\nonumber\\
&=\frac{1}{d}\sum_{j,j'=0}^{d-1}(|j'\>\<j|) \bigg(\rme^{\rmi \pi h(j'-j)(j'+j-1)/m}\sum_{t\in \bbZ_d}\omega^{ t(j'-j-1)}\bigg)\nonumber\\
&=\sum_{j=0}^{d-1}(|\hat{j}\>\<j|)  \big(\rme^{\rmi \pi h(\hat{j}-j)(\hat{j}+j-1)/m}\big)\nonumber\\
&=\mu^{-h(d-1)(d-2)/2}(|0\>\<d-1|)+\sum_{j=0}^{d-2}\mu^{hj}(|j+1\>\<j|)\nonumber\\
&=XW^h,
\end{align}
where $\hat{j}=j+1$ if $j\leq d-2$ and $\hat{j}=0$ if $j=d-1$.
\end{proof}

\begin{proof}[Proof of \eref{eq:Omega3}]
	The sum of all test projectors $P_{\bfh}$ that satisfy the condition $\sum_k h_k=0\mod m$ can be expressed as
	\begin{align}
	\sum_{\bfh}P_{\bfh}
	&= \frac{m^{n-1}}{d}\openone + \frac{1}{d} \sum_{l=1}^{d-1}  \sum_{\bfh}  \prod^{n}_{k=1} \big(X_k W_k^{h_k}\big)^l\nonumber\\
	&= \frac{m^{n-1}}{d}\openone + \frac{1}{dm} \sum^{d-1}_{l=1} \sum_{s\in \bbZ_m} \bigg[\sum_{h=1}^{m} \mu^{-sh} (XW^h)^l \bigg]^{\otimes n} \nonumber\\
	&= \frac{m^{n-1}}{d}\openone + \frac{1}{dm} \sum^{d-1}_{l=1} \sum_{j=0}^{d-1} \big( m |j+ l\>\<j| \big)^{\otimes n}\nonumber\\
	&= \frac{m^{n-1}}{d} \biggl[ \openone+\sum_{j'\ne j}(|j'\>\<j|)^{\otimes n} \biggr], \label{eq:sumPh}
	\end{align}
	which implies \eref{eq:Omega3}. To derive the  third equality, for each  $s\in\bbZ_m$ and $l=1,2,\dots,d-1$, define
\begin{align}
f(s,l):&=\sum_{h=1}^{m} \mu^{-sh} (XW^h)^l.
\end{align}	
Thanks  to \eref{eq:XW^h} in the main text or \eref{eq:XW^hproof}, we have
	\begin{align}\label{eq:f(s,l)}
f(s,l)&=\sum_{h=1}^{m} \mu^{-sh} \sum_{t\in \bbZ_d} \omega^{-tl}|\psi_{ht}\>\<\psi_{ht}| \nonumber\\
	&=\frac{1}{d}\sum_{j,j'=0}^{d-1}(|j'\>\<j|) \bigg(\sum_{t\in \bbZ_d} \omega^{t(j'-j-l)}\bigg)   \nonumber\\
	&\quad\,        \times\bigg(\sum_{h=1}^{m}\mu^{h[(j'-j)(j'+j-1)/2-s]}\bigg)  \nonumber\\
	&=\sum_{j=0}^{d-1}(|\hat{j}\>\<j|) \bigg(\sum_{h=1}^{m}\mu^{h[g(j,l,d)-s]}\bigg),
	\end{align}
where
\begin{align}
\hat{j}:=&\begin{cases}
j+l          &j+l\leq d-1,\\
j+l-d \qquad &j+l\geq d,
\end{cases}\\
g(j,l,d):=&\frac{1}{2}(\hat{j}-j)(\hat{j}+j-1). \label{eq:gjld}
	\end{align}
	The last term in the parentheses in \eref{eq:f(s,l)} vanishes except when
\begin{equation}\label{eq:gjlds}
g(j,l,d)-s=0\mod m,
\end{equation}	
in which case it is equal to $m$.
For given $l$ and $j$, note that \eref{eq:gjlds} has a unique solution for $s\in \bbZ_m$. Conversely, for each  $l\in\{1,2,\dots,d-1\}$ and  $s\in \bbZ_m$,  \eref{eq:gjlds} has at most one solution for $j\in\{0,1,\dots,d-1\}$ by \lref{lem:gjlInjective} below given that  $m\geq\lceil\frac{3}{4}(d-1)^2\rceil$. This result implies the third equality in \eref{eq:sumPh} and completes the proof of \eref{eq:Omega3}.
\end{proof}

\begin{lemma}\label{lem:gjlInjective}
Let
\begin{equation}
g_m(j,l,d):=g(j,l,d)\mod m,
\end{equation}
where $g(j,l,d)$ is defined in \eref{eq:gjld}. Suppose $d\geq 3$, $m\geq\lceil\frac{3}{4}(d-1)^2\rceil$, and  $l\in\{1,2,\dots,d-1\}$; then  $g_m(j,l,d)$ is injective in $j$ for $j\in \{0,1,\ldots, d-1\}$.
\end{lemma}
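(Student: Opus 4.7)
The plan is to exploit the identity
\[
g(j,l,d)=\tfrac{1}{2}(\hat{j}-j)(\hat{j}+j-1)=\binom{\hat{j}}{2}-\binom{j}{2},
\]
which ties $g$ directly to the phases $\binom{j}{2}$ appearing in the 2-design construction of \eref{eq:2designGood}. I would then partition $\{0,1,\ldots,d-1\}$ according to whether the cyclic shift wraps around: a \emph{non-wrapping} set $\{0,\ldots,d-l-1\}$ on which $\hat{j}=j+l$ and $g(j,l,d)=lj+\binom{l}{2}$, and a \emph{wrapping} set $\{d-l,\ldots,d-1\}$ on which $\hat{j}=j+l-d$ and $g(j,l,d)=(l-d)j+\binom{l-d}{2}$. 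Supposing for contradiction that $g_m(j,l,d)=g_m(j',l,d)$ for some $0\le j<j'\le d-1$, three sub-cases arise.

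When both $j,j'$ lie on the same side of the wrap, $g(j')-g(j)$ collapses to $l(j'-j)$ or $(l-d)(j'-j)$, whose absolute value is at most $l(d-l-1)$ or $(d-l)(l-1)$. The AM--GM inequality bounds both by $(d-1)^2/4$, which is strictly below $m\ge\lceil\tfrac{3}{4}(d-1)^2\rceil$, so the difference lies in $(0,m)$ in absolute value and cannot vanish modulo $m$.

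The mixed case (one index in each set) is the crux, and where I expect the main obstacle to lie. Applying the binomial identity $\binom{a+b}{2}=\binom{a}{2}+ab+\binom{b}{2}$ with $s=j'-j$ and $\hat{j'}=\hat{j}+s-d$ yields the closed form
\[
g(j',l,d)-g(j,l,d)=ls-d\hat{j'}-\binom{d}{2},
\]
with $\hat{j'}\in\{0,\ldots,l-1\}$ and $s\in\{\hat{j'}+1,\ldots,\hat{j'}+d-l\}$. Maximizing this linear function over the admissible region gives the upper bound $l(d-l)-\binom{d}{2}$, which is strictly negative for $d\ge 3$ since $l(d-l)\le d^2/4<\binom{d}{2}$, so the difference never equals zero. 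Minimizing gives the lower bound $l^2-dl+d-\binom{d}{2}$, whose minimum over $l$ is $-\tfrac{3}{4}d(d-2)$. The identity $(d-1)^2-d(d-2)=1$ then forces $m\ge\lceil\tfrac{3}{4}(d-1)^2\rceil>\tfrac{3}{4}d(d-2)$, so the lower bound stays strictly above $-m$. Hence $g(j')-g(j)\in(-m,0)$ in every sub-case, contradicting $g_m(j,l,d)=g_m(j',l,d)$.

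The mixed case is tight: its extremal value $-\tfrac{3}{4}d(d-2)$ is what forces the constant $\tfrac{3}{4}$ in the hypothesis, because the worst $l\approx d/2$ saturates AM--GM simultaneously for the $l(d-l)$ contribution and for the overall lower bound. The pure-case argument alone would only require $m>(d-1)^2/4$, so essentially all the difficulty of the lemma is concentrated in controlling the wrap-across difference and showing that it cannot coincide with a multiple of $m$.
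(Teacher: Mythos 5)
Your proposal is correct and takes essentially the same route as the paper's own proof: both split the index set at the wrap point $j=d-l$ and reduce injectivity mod $m$ to elementary bounds whose tight case is the cross-interval comparison controlled by $\tfrac{3}{4}d(d-2)<\tfrac{3}{4}(d-1)^2\leq m$; the only difference is bookkeeping, since you bound pairwise differences (via the closed forms $lj+\binom{l}{2}$, $(l-d)j+\binom{l-d}{2}$ and AM--GM), whereas the paper bounds the values of $g$ on each piece, invokes monotonicity, and compares the two ranges. One trivial slip: your final sentence claims $g(j')-g(j)\in(-m,0)$ in every sub-case, while in the non-wrapping pure case the difference lies in $(0,m)$ — harmless, since your explicit case analysis already shows each difference is nonzero with absolute value below $m$.
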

This lemma follows from Proposition~4.3 in \rcite{RoyS07}. Here we present a self-contained proof for completeness.
\begin{proof}
When $j\in\{0,\dots,d-l-1\}$, the function $g(j,l,d)$ is monotonically increasing in $j$, and we have
\begin{equation}
0\leq l(l-1)/2 \leq g(j,l,d)\leq l(2d-l-3)/2 <m
\end{equation}
given that  $m\geq\lceil\frac{3}{4}(d-1)^2\rceil$.
When $j\in\{d-l,\dots,d-1\}$ by contrast, $g(j,l,d)$ is monotonically decreasing in $j$, and we have
\begin{align}
&-m <  (l-d)(d+l-3)/2 \leq g(j,l,d)\nonumber \\
&\leq(l-d)(d-l-1)/2\leq0.
\end{align}
In addition, it is straightforward to verify that
\begin{equation}
l(2d-l-3)/2<(l-d)(d+l-3)/2+m.
\end{equation}
Therefore, the two sets of numbers $\{g_m(j,l,d)\}_{j=0}^{d-l-1}$ and $\{g_m(j,l,d)\}_{j=d-l}^{d-1}$ have no intersection; moreover,  all the numbers  $g_m(0,l,d), g_m(1,l,d), \dots, g_m(d-1,l,d)$ are distinct, which confirms \lref{lem:gjlInjective}.
\end{proof}

\section{\label{app:StratGHZ-like}Alternative optimal protocol for verifying GHZ-like states}
In the main text we proposed an optimal strategy for verifying the GHZ-like state $|\xi\>=\sum_{j=0}^{d-1}\lambda_j|j\>^{\otimes n}$ based on one-way LOCC, which requires only $1+2^{n-1}$ distinct tests when $d=2$ and
$1+m^{n-1}$ distinct tests with $m\geq\lceil\frac{3}{4}(d-1)^2\rceil$  when $d\geq 3$. Here we propose an  alternative optimal protocol  using much fewer measurement settings, assuming that  the local dimension $d$ is an odd prime. In addition, for each test, all parties except for one of them can perform Pauli measurements as in the case of qubits.
The underlying idea is similar to the construction of the strategy $\Omega_{\5}$ in Sec.~\ref{sec:GHZlikeOpt}.

For each string $\bfr\in \bbZ_d^n$ with $\sum_k r_k=0\mod d$, define the test projector
\begin{equation}
P'_{\bfr}:=\big(I^{\otimes (n-1)}\otimes M\big)P_{\bfr}\big(I^{\otimes (n-1)}\otimes M\big),
\end{equation}
where $M:=\sqrt{d}\diag(\lambda_0,\lambda_1,\dots,\lambda_{d-1})$, and $P_{\bfr}$ is the test projector given in \eref{eq:Pother(d=p)}. Then $P'_{\bfr}$ can be realized by adaptive local projective measurements as described in  Sec.~\ref{sec:GHZlikeOpt}.
According to \eref{eq:sumPr}, we have
\begin{align}
\frac{1}{d^{n-1}}\sum_{\bfr}P_{\bfr}&=\frac{1}{d} \biggl[\openone+\sum_{j\ne j'}(|j'\>\<j|)^{\otimes n}\biggr].
\end{align}
As a corollary,
\begin{align}
&\frac{1}{{d^{n-1}}}\sum_{\bfr}P'_{\bfr}
= \big(I^{\otimes (n-1)}\otimes M\big)\bigg(\frac{\sum_{\bfr}P_{\bfr}}{{d^{n-1}}}\bigg)\big(I^{\otimes (n-1)}\otimes M\big)\nonumber\\
&=|\xi\>\<\xi|+I^{\otimes (n-1)}\otimes \rho_{n}-\sum_{j=0}^{d-1}\lambda_j^2(|j\>\<j|)^{\otimes n},\label{eq:PrpSum}
\end{align}
where  $\rho_{n}=\sum_{j=0}^{d-1}\lambda_j^2|j\>\<j|$ is the reduced state for party $n$. Note that
the right-hand side in \eref{eq:PrpSum} is identical to its counterpart in \eref{eq:Pi1}.

Suppose we perform the test $P_0$ with probability $p$ and the other tests $P'_{\bfr}$ with probability $(1-p)/d^{n-1}$ each.
Then the verification operator reads
\begin{equation}
\Omega'_{\5}=p P_0+\frac{1-p}{d^{n-1}}\sum_{\bfr}P'_{\bfr},
\end{equation}
and we have
\begin{equation}
\beta(\Omega'_{\5})=\beta(\Omega_{\5})
=\max\{p,(1-p)\lambda_0^2\}\geq\frac{\lambda_0^2}{1+\lambda_0^2}
\end{equation}
as in \eref{eq:betaOmega4} in the main text.
The lower bound is attained when $p={\lambda_0^2}/(1+\lambda_0^2)$, in which case we can achieve the maximum spectral gap $\nu(\Omega'_{\5})=1/(1+\lambda_0^2)$ as in \eref{eq:nuOmega5}.
When $d$ is an odd prime,  therefore, $1+d^{n-1}$ distinct tests are sufficient for constructing a strategy that is equivalent to $\Omega_{\5}$ in Sec.~\ref{sec:GHZlikeOpt}, which is optimal among all strategies based on one-way LOCC. Nevertheless, the efficiency can be improved further  by virtue of more communications as employed in the construction of  $\Omega_{\6}$ in Sec.~\ref{sec:GHZlikeTwoWay}.

\end{document}